\documentclass[envcountsame,envcountsect,runningheads]{elsarticle}
\usepackage{latexsym}
\usepackage{amssymb}
\usepackage{epic}
\usepackage{graphicx}
\usepackage{tikz}
\usetikzlibrary{positioning,calc}

\usepackage{amsmath}
\usepackage{bbm}
\usepackage{mathrsfs}
\usepackage{caption}
\usepackage{cite}
\usepackage{wrapfig}

\usepackage{longtable}

\pagestyle{plain}

\newtheorem{theorem}{Theorem}
\newtheorem{lemma}[theorem]{Lemma}
\newtheorem{corollary}[theorem]{Corollary}

\newenvironment{example}{\noindent{\bf Example:}}{}
\newenvironment{proof}{\noindent{\bf Proof:}}{
\par\vspace{3mm}}

\begin{document}

\title{Slowly synchronizing automata with fixed alphabet size}

\author[vu,ru]{Henk Don}
\ead{h.don@math.ru.nl}
\author[tue,ru]{Hans Zantema}
\ead{h.zantema@tue.nl}
\author[ru]{Michiel de Bondt}
\ead{m.debondt@math.ru.nl}

\address[vu]{ Free University Amsterdam, The Netherlands}
\address[ru]{ Radboud University Nijmegen, The Netherlands}
\address[tue]{Eindhoven University of Technology, The Netherlands}


\begin{abstract}
It was conjectured by \v{C}ern\'y in 1964 that a synchronizing DFA on $n$ states always has a
shortest synchronizing word of length at most $(n-1)^2$, and he gave a sequence of DFAs for which
this bound is reached.

In this paper, we investigate the role of the alphabet size. For each possible alphabet size, we count DFAs on $n \le 6$ states which synchronize in $(n-1)^2 - e$ steps, for all
$e < 2\lceil n/2 \rceil$. Furthermore, we give constructions of automata with any
number of states, and $3$, $4$, or $5$ symbols, which synchronize slowly, namely in
$n^2 - 3n + O(1)$ steps.

In addition, our results prove \v{C}ern\'y's conjecture for $n \le 6$. Our computation has led to $27$ DFAs on $3$, $4$, $5$ or $6$ states, which synchronize in $(n-1)^2$ steps,
but do not belong to \v{C}ern\'y's sequence. Of these $27$ DFA's, $19$ are new, and the
remaining $8$ which were already known are exactly the \emph{minimal} ones: they will not synchronize
any more after removing a symbol.

So the $19$ new DFAs are extensions of automata which were already known, including
the \v{C}ern\'y automaton on $3$ states. But for $n > 3$,
we prove that the \v{C}ern\'y automaton on $n$ states does not admit non-trivial
extensions with the same smallest synchronizing word length $(n-1)^2$.

%
\end{abstract}

\maketitle

\section{Introduction}

A {\em deterministic finite automaton (DFA)} over a finite alphabet $\Sigma$ is called
{\em synchronizing} if it admits a {\em synchronizing word}. Here a word $w \in \Sigma^*$ is
called {\em synchronizing} (or directed, or reset) if  starting in any state $q$, after
processing $w$ one always ends in one particular state $q_s$. So processing $w$ acts as a reset
button:
no matter in which state the system is, it always moves to the particular state $q_s$.
Now \v{C}ern\'y's conjecture
(\citep{C64}) states:
\begin{quote}
Every synchronizing DFA on $n$ states admits a synchronizing word of length $\leq (n-1)^2$.
\end{quote}

Surprisingly, despite extensive effort this conjecture is still open, and even the best known upper
bound is still cubic in $n$.  \v{C}ern\'y himself
(\citep{C64}) provided an upper bound of $2^n-n-1$ for the
length of the shortest synchronizing word. A substantial
improvement was given by Starke \citep{starke}, who was the first
to give a polynomial upper bound, namely
$1+\frac{1}{2}n(n-1)(n-2)$. The best known upper bound for a long times was $\frac{1}{6}(n^3-n)$, established by Pin in
1983 \citep{pin}. He reduced proving this upper bound to a purely
combinatorial problem which was then solved by Frankl \citep{frankl}.
Since then for more than 30 years only limited progress for the general case
has been made. Very recently a slight improvement was claimed by Szyku\l{}a \citep{szykula17}.

The conjecture has been proved for some particular classes of
automata, such as circular automata, aperiodic automata and
one-cluster automata with prime length cycle. For these results and some more partial
answers, see \citep{almeida,beal,don,dubuc,E90,steinberg}.
For a survey on synchronizing
automata and \v{C}ern\'y's conjecture, we refer to \citep{volkov}.

\begin{wrapfigure}{r}{4cm}

\vspace{-5mm}
\includegraphics[scale=0.25]{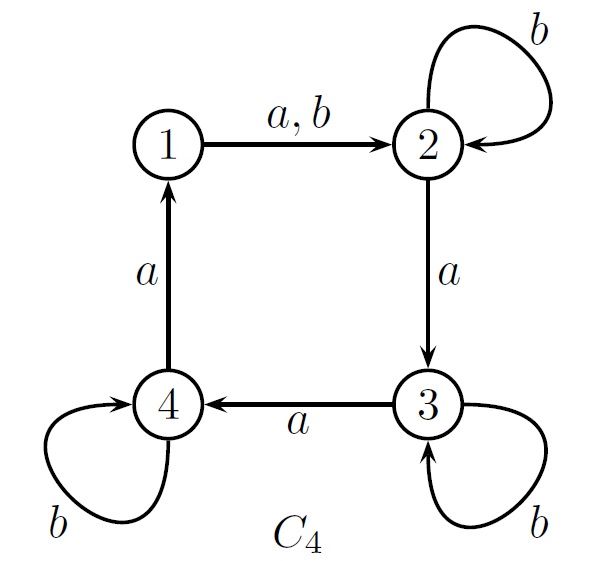}

\end{wrapfigure}

In \citep{C64}, \v{C}ern\'y already gave DFAs for which the bound of the conjecture is attained:
for $n \geq 2$ the DFA $C_n$ is defined
to consist of $n$ states $1,2,\ldots,n$, and two symbols $a,b$, acting by
$\delta(i,a) = i+1$ for $i = 1,\ldots,n-1$, $\delta(n,a) = 1$, and  $\delta(i,b) = i$ for
$i = 2,\ldots,n$, $\delta(1,b) = 2$. For $n=4$ this is depicted on the right.

For $C_n$ the string $w = b (a^{n-1}b)^{n-2}$ of length $|w| = (n-1)^2$ satisfies $qw = 2$ for all $q
\in Q$, so is synchronizing. No shorter synchronizing word exists for
$C_n$ as is shown in \citep{C64}, showing that the bound in \v{C}ern\'y's conjecture is sharp.

One topic of this paper is to investigate all DFAs for which the bound is reached; these
DFAs are called {\em critical}. Moreover, we also investigate bounds on synchronization lengths for
fixed alphabet size.  A DFA for which the bound $(n-1)^2$ is exceeded is called {\em
super-critical}, so \v{C}ern\'y's conjecture states that no super-critical DFA exists.
To exclude infinitely many trivial extensions, we only consider {\em basic} DFAs: no two distinct
symbols act in the same way in the automaton, and no symbol acts as the identity. Obviously, adding
the identity or copies of existing symbols has no influence on synchronization.

An extensive investigation was already done by Trahtman in \citep{T06}:
by computer support and clever algorithms
all critical DFAs on $n$ states and $k$ symbols were
investigated for $3 \leq n \leq 7$ and $k \leq 4$, and for $n=8,9,10$ and $k=2$. Here a
minimality requirement was added:  examples were excluded if criticality may be kept after
removing one symbol. Then up to isomorphism there are exactly 8 of them, apart
from the basic \v{C}ern\'y examples: 3 with 3 states, 3 with 4, one with 5 and one with
6. So apart from the basic \v{C}ern\'y examples only 8 other critical DFAs were known.
It was conjectured in \citep{T06} that no more exist, which is refuted in this paper by finding
several more not satisfying the minimality condition, all being extensions of known examples with 3 or 4 states.
As one main result we prove that up to isomorphism for $n=3$ there are exactly 15 basic
critical DFAs and for $n=4$ there are exactly 12 basic critical DFAs, 19 more than the four
for $n=3$ and the four for $n=4$ that were known before. For both $n=5$ and $n=6$ we prove that there
are no more basic critical DFAs than the two that were known before. For $n=3$ we give a self-contained
proof; for $n=4,5,6$ we exploit extensive computer support. For all $n \leq 6$ we investigate
the DFAs with several alphabet sizes and minimal synchronization lengths; as expected no super-critical DFAs
exist.

Two typical basic critical DFAs that were not known before are depicted as follows.

\begin{center}
\includegraphics[scale=0.25]{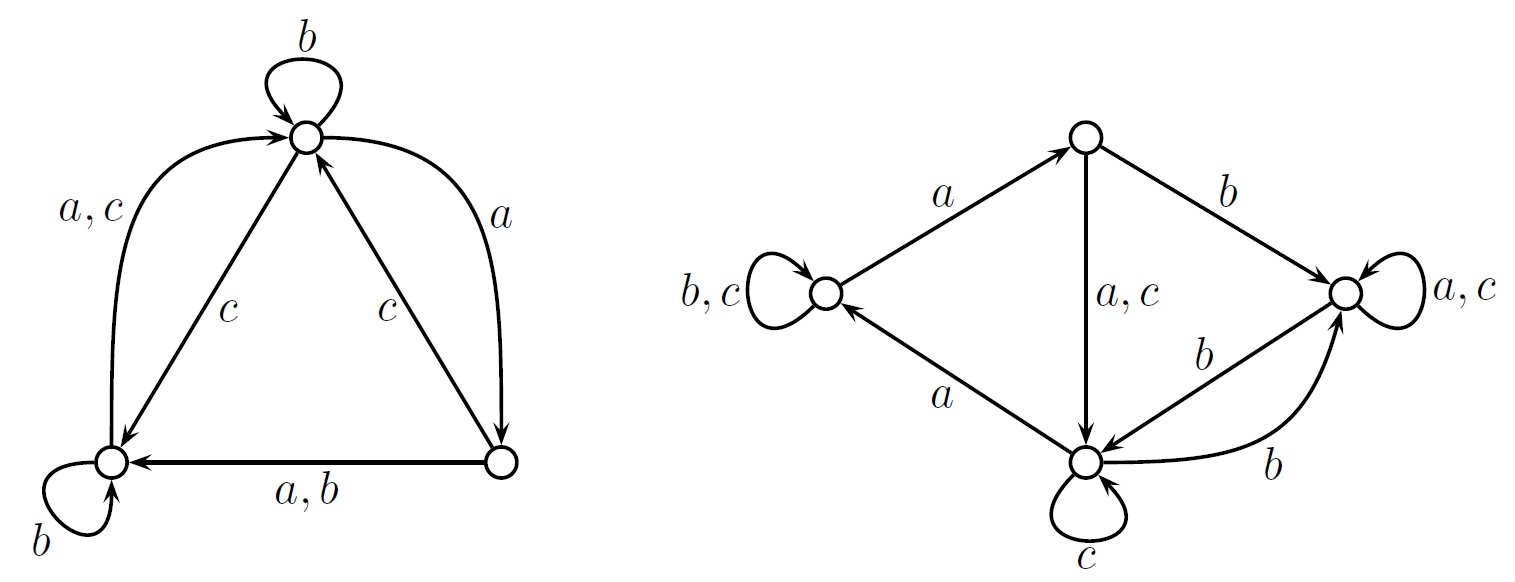}
\end{center}

The left one restricted to $a,b$ is exactly $C_3$, while restricted to $a,c$ it is exactly a DFA
found in \citep{T06} that we call T3-1 in Section \ref{sec34}. So this example is a kind of union
of $C_3$ and T3-1. It has four distinct synchronizing words of the minimal length 4 described
by $(b+c)aa(b+c)$, having two distinct synchronizing states.

The right one restricted to $a,b$ is the example found in \citep{CPR71}
that we call CPR in Section \ref{sec34}. However, the extra non-trivial symbol $c$ does not
occur in any known critical DFA on four states.
It has eight distinct synchronizing words of the minimal length 9 described by
$(b+c)aa(b+c)abaa(b+c)$, again having two distinct synchronizing states.

In the partial order on the 15 critical basic DFAs on three states, the four given in \citep{T06} are the
minimal ones, but there is only one maximal one, being an upper bound of all. Here {\em maximal} means that
that it does not admit an extension that is still basic and critical.
In the partial order on the 12 critical basic DFAs on four states, the four given in \citep{T06} are the
minimal ones, and exactly three are maximal. Two of the maximal examples are also minimal; the other is an
upper bound of the two remaining minimal ones.

For $n \geq 5$, we wonder whether the minimal critical DFAs in Trahtman's analysis admit critical
extensions just as for $n \leq 4$. The answer is negative. Apart from $C_n$ these include only two
minimal critical DFAs: one with 5 and one with 6 states, and our computer search shows that they do
not admit critical extensions. For $C_n$ this boils down to our
theorem stating that when adding an extra symbol to $C_n$ not acting as the identity or as
one of the existing symbols, always a strictly shorter synchronizing word can be obtained.
The theorem is proved by a case analysis in how this extra symbol acts on the states.

With two symbols the minimal synchronization
length $(n-1)^2$ can be reached for all $n$, but no critical DFAs with more than two symbols and at least $6$ states are known. It is a natural question which minimal
synchronization lengths can be reached with alphabet size $k > 2$. We prove that for $k = 3,4,5$ the
minimal synchronization length $n^2 - 3n +7 -k$ can be reached, and even for $k$ being exponential in $n$,
a quadratic expression in $n$ can be reached. 

This paper is mainly based on the LATA 2017 paper \citep{DZ17} by the first two authors, but also contains
several new contributions, in particular results for fixed alphabet size.
It is organized as follows. In Section \ref{secprel} we give some preliminaries. Section \ref{seclb} investigates
general lower bounds on synchronization length both depending on DFA size and alphabet size that were not published before.
Section \ref{sec34} investigates DFAs of at most six states. The resulting new critical DFAs on 3 and 4 states
already appeared in \citep{DZ17}, but here we extend the full analysis for 3 and 4 states to 5 and 6 states along
the lines of \citep{BDZ17}.
Moreover, here we do this not only for critical DFAs, but also for synchronization
length just below $(n-1)^2$, namely $(n-1)^2 - e$ steps for all $e < 2\lceil n/2 \rceil$,
and split up for distinct alphabet sizes.
A final part is Section \ref{seccn}, where we prove our property for $C_n$ for
arbitrary $n$: $C_n$ has no critical extension for $n \geq 5$.
This is done by an extensive case analysis showing that any extra non-trivial symbol $c$ acting on the $n$ states 
always yields a shorter synchronizing word. Here we give the full proof for which space was lacking in
\citep{DZ17}. We conclude in Section \ref{secconc}.

\section{Preliminaries}
\label{secprel}

A {\em deterministic finite automaton (DFA)} over a finite alphabet $\Sigma$ consists of a finite set
$Q$ of states and a map $\delta: Q \times \Sigma \to Q$.\footnote{For synchronization the
initial state and the set of final states in the standard definition may be ignored.}
A DFA is
called {\em basic} if the mappings $q \mapsto \delta(a,q)$ are distinct for all $a \in \Sigma$, and
are not the identity. For $w \in \Sigma^*$ and $q \in Q$ define $qw$ inductively by
$q \epsilon = q$ and $q w a = \delta(qw,a)$ for
$a \in \Sigma$. So  $qw$ is the state where one ends when starting in $q$ and applying
$\delta$-steps for the symbols in $w$ consecutively, and $qa$ is a short hand notation for $\delta(q,a)$.
A word $w \in \Sigma^*$ is called {\em synchronizing} if a
state $q_s \in Q$ exists such that $q w = q_s$ for all $q \in Q$. Stated in words: starting in
any state $q$, after processing $w$ one always ends in state $q_s$.
Obviously, if $w$ is a synchronizing word then so is $wu$ for any word $u$.
A DFA on $n$ states is
{\em critical} if its shortest synchronizing word has length $(n-1)^2$; it is
{\em super-critical} if its shortest synchronizing word has length $> (n-1)^2$.
A critical DFA is {\em minimal} if it is not the extension of another critical DFA by one or more
extra symbols; it is {\em maximal} if it does not admit a basic critical extension.

For $n\geq 2$ and $1\leq k\leq n^n-1$, we define $d(n,k)$ to be the maximal shortest synchronizing word
length in a synchronizing $n$-state basic DFA with alphabet size $k$.

The basic tool to analyze synchronization is by exploiting the {\em power set automaton}.
For any DFA $(Q,\Sigma, \delta)$ its power set automaton is the DFA $(2^Q,\Sigma, \delta')$
where $\delta' : 2^Q \times \Sigma \to 2^Q$ is defined by
$\delta'(V,a) = \{q \in Q \mid \exists p \in V : \delta(p,a) = q \}$.
For any $V \subseteq Q, w \in \Sigma^*$ we define $Vw$ as above, using $\delta'$ instead of
$\delta$.  From this definition one easily proves that
$Vw = \{ qw \mid q \in V \}$ for any $V \subseteq Q, w \in \Sigma^*$.
A set of the shape $\{q\}$ for $q \in Q$ is called a {\em singleton}.
So a word $w$ is synchronizing if and only if $Qw$ is a singleton.
Hence a DFA is synchronizing if and only if its
power set automaton admits a path from $Q$ to a singleton, and the shortest
length of such a path corresponds to the shortest length of a synchronizing word.

The power set automaton of $C_4$ is depicted below, in which indeed the unique shortest
path from $Q$ to a singleton (indicated by fat arrows from 1234 to 2) has length 9.

\begin{center}
\includegraphics[scale=0.25]{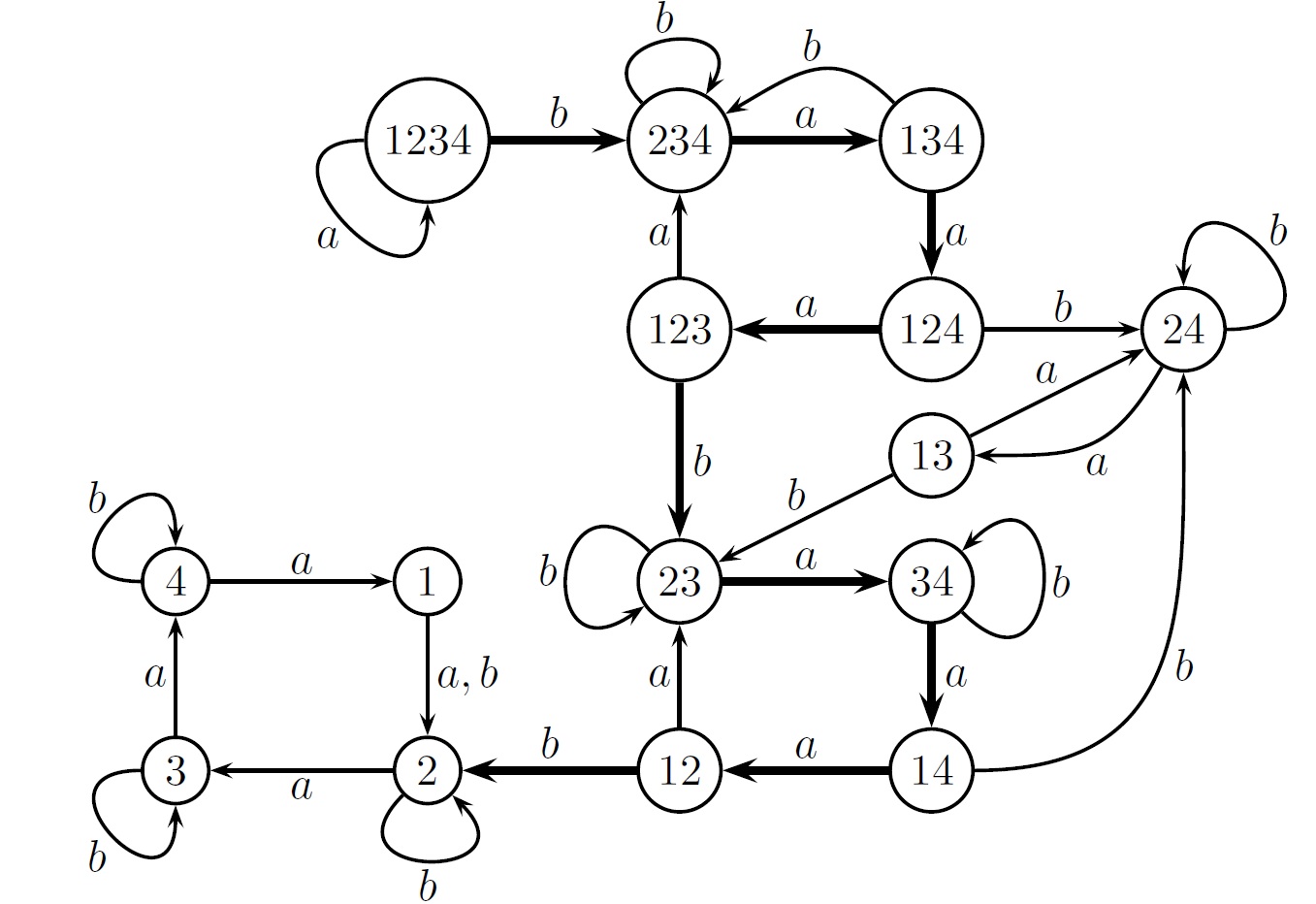}
\end{center}

\section{Lower bounds for fixed alphabet size}
\label{seclb}

A central question in this paper is how the maximal synchronizing word length of a DFA depends on the size of the alphabet.
The following theorem gives a quite straightforward construction to create DFAs with large alphabet and long shortest synchronizing words. This formalizes observations in
the same spirit that have been made before \citep{B14,AVG12}.
\begin{theorem}\label{theorem:dnk}
For $n\geq 2$, $0\leq m\leq n-2$ and $1\leq k \leq (n-m)^{n-m}-1$, the following inequality holds:
\[
d(n,(k+1)n^m-1)\geq d(n-m,k).
\]
\end{theorem}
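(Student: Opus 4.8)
The plan is to start from a synchronizing $(n-m)$-state basic DFA $B$ on alphabet $\Sigma$ with $|\Sigma| = k$ whose shortest synchronizing word has length exactly $d(n-m,k)$, and to build from it an $n$-state basic DFA $A$ whose alphabet has size $(k+1)n^m - 1$ and whose shortest synchronizing word has length at least $d(n-m,k)$. The state set of $A$ is that of $B$ together with $m$ new states, say $Q_A = Q_B \cup \{p_1,\dots,p_m\}$. The key idea is to designate one state $q_0 \in Q_B$ as a ``gateway'': the $m$ new states should be forced to collapse into $Q_B$ only via $q_0$, so that synchronizing $A$ is at least as hard as first herding everything into $Q_B$ and then synchronizing $B$. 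Concretely, I would add one new symbol $c$ that sends every new state $p_i$ to $q_0$ and fixes all of $Q_B$ (or, to keep things basic and to avoid $c$ acting as a permutation issue, sends $p_i \mapsto p_{i-1}$ with $p_0 := q_0$ and is the identity on $Q_B$ — the precise choice is a routine detail). The remaining symbols are obtained by extending each of the old symbols (and combinations with powers that move through the new states) in all possible ways over the enlarged state set; the count $(k+1)n^m - 1$ is exactly what you get by taking the $k+1$ ``behaviours on $Q_B$'' (the $k$ old symbols plus the identity) and multiplying by the $n^m$ ways to define the action on the $m$ new states, then subtracting $1$ for the identity map which must be excluded from a basic DFA.

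First I would make the arithmetic of the alphabet size precise: a symbol of $A$ is determined by its restriction to $Q_B$ and its values on $p_1,\dots,p_m$. To guarantee that $A$ restricted (in the appropriate sense) to $Q_B$ still ``is'' $B$, I restrict the allowed restrictions to $Q_B$ to be either one of the $k$ maps coming from $B$, or the identity on $Q_B$; that is $k+1$ choices. Each of the $m$ new states may be sent to any of the $n$ states of $A$ freely, giving $n^m$ choices. The total is $(k+1)n^m$, and removing the single globally-identity map leaves $(k+1)n^m - 1$ symbols, all distinct and none the identity, so $A$ is basic. (One should double-check that no two of these maps coincide as functions $Q_A \to Q_A$; since the restriction to $Q_B$ already distinguishes the $k$ old symbols from the identity, and the new-state values distinguish the rest, this holds.) The constraint $1 \le k \le (n-m)^{n-m}-1$ just says $B$ exists as a basic DFA on $n-m$ states with $k$ symbols, and $0 \le m \le n-2$ keeps $n-m \ge 2$.

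Next comes the heart of the argument: showing that the shortest synchronizing word of $A$ has length $\ge d(n-m,k)$. I would argue that any synchronizing word $w$ of $A$ yields a synchronizing word of $B$ of length at most $|w|$. The cleanest route is via the power set automaton: consider how $Q_A w$ evolves. As long as the image set still contains some new state $p_i$, apply the observation that the only symbols that can decrease or remove new states either act as an old symbol on $Q_B$ (hence can be simulated in $B$ after projecting) or move new states among themselves and into $q_0 \in Q_B$; in either case, projecting the run onto $Q_B$ (sending each $p_i$ to $q_0$, say) gives a run in the power set automaton of $B$ starting from a set containing $Q_B$, i.e. from $Q_B$ itself, whose length is $|w|$ and which ends in a singleton. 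More carefully, I would define a projection $\pi: Q_A \to Q_B$ and check that $\pi(Q_A w) \supseteq Q_B \pi(w')$ where $w'$ is $w$ with each symbol replaced by the old symbol or identity it induces on $Q_B$; since $Q_A \supseteq Q_B$, we get $Q_B w'$ is a singleton, so $w'$ is synchronizing for $B$ and $|w'| = |w|$. Hence $|w| \ge d(n-m,k)$, which is the desired inequality $d(n,(k+1)n^m-1) \ge d(n-m,k)$.

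The main obstacle, and the place where the construction has to be chosen with care rather than arbitrarily, is ensuring the projection argument is tight — that is, that introducing the new states and the extra symbols cannot create a \emph{shortcut} that synchronizes $A$ in fewer than $d(n-m,k)$ steps. This is why the new symbols acting on $Q_B$ are restricted to old-symbol-or-identity behaviour: if we allowed arbitrary behaviour on $Q_B$ we might accidentally add a powerful collapsing symbol to $B$'s repertoire and lower its synchronization length. I expect the verification that every symbol of $A$ projects to a symbol (or the identity) of $B$, together with the containment $Q_A w \supseteq$ (a set projecting onto) $Q_B w'$, to be the only genuinely delicate step; the counting of symbols and the basic-ness check are routine.
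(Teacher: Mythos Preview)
Your construction is exactly the paper's: take all $(k+1)n^m$ maps on $Q_A$ whose restriction to $Q_B$ is a symbol of $B$ or the identity, and discard the global identity. The lower-bound argument is also the same in substance, but you are over-engineering it: no projection $\pi$ is needed, because $Q_B$ is \emph{invariant} under every symbol of $A$ (each restricts to a self-map of $Q_B$ by construction), so for any synchronizing word $w$ of $A$ one has $Q_B w \subseteq Q_A w$, a singleton, and replacing each letter of $w$ by its restriction to $Q_B$ (then deleting identities) yields a synchronizing word for $B$ of length $\le |w|$. One small omission: you never explicitly argue that $A$ is synchronizing --- since all extensions are present, some letter maps all of $Q_A$ into $Q_B$, after which any extension of a synchronizing word for $B$ finishes the job.
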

\begin{proof}
Suppose that $A=(Q,\Sigma,\delta)$ is a basic DFA with $|Q|=n-m$, $|\Sigma|=k$ and shortest synchronizing word length $d:=d(n-m,k)$. Denote the states by $Q = \left\{1,2,\ldots,n-m\right\}$.

Now define a new DFA $B = (\tilde Q,\tilde \Sigma,\tilde \delta)$ with  $\tilde Q := \left\{1,2,\ldots,n\right\}$ as follows. To construct the alphabet $\tilde\Sigma$, first add the identity symbol to $\Sigma$ and let this set be called $\Sigma^+$. For each $a\in\Sigma^+$, define all $n^m$ possible symbols on $\tilde Q$ that coincide with $a$ when restricted to $Q$. Removing the identity symbol on $\tilde Q$, this gives the alphabet $\tilde\Sigma$ containing $(k+1)n^m-1$ symbols on $\tilde Q$.

When restricted to $Q$, the automata $A$ and $B$ have the same symbols. Every synchronizing word $w$ for $B$ corresponds to a synchronizing word for $A$, and therefore has length at least $d$. Since $\tilde\Sigma$ contains all possible extensions of letters in $\Sigma$, in particular there exists a letter $a\in\tilde \Sigma$ for which $\tilde Qa\subseteq Q$, proving that $B$ is synchronizing.
\qed\end{proof}

\begin{corollary}\label{cor:large_alphabet}
For $n\geq 2$ and $2\leq k \leq 3n^{n-2}-1$,
\[
d(n,k) \geq \left(n-\left\lceil\frac{\log(k+1)-\log(3)}{\log(n)}\right\rceil-1\right)^2.
\]
\end{corollary}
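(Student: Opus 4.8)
The plan is to invoke Theorem~\ref{theorem:dnk} with its alphabet parameter equal to $2$ and the \v{C}ern\'y automaton as the small DFA, and then to trim the oversized alphabet of the resulting construction down to size exactly~$k$.

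Set $m:=\bigl\lceil(\log(k+1)-\log 3)/\log n\bigr\rceil=\lceil\log_n\tfrac{k+1}{3}\rceil$, i.e.\ the least integer with $3n^m-1\ge k$. From $2\le k\le 3n^{n-2}-1$ one gets $1\le\tfrac{k+1}{3}\le n^{n-2}$, hence $0\le m\le n-2$ and $2\le n-m\le n$; in particular $(n-m)^{n-m}\ge 4$, so the hypotheses of Theorem~\ref{theorem:dnk} hold for the triple $(n,m,2)$. Carrying out the construction in its proof with $A:=C_{n-m}$ (a basic DFA on $n-m$ states with symbols $a,b$ and shortest synchronizing word of length $(n-m-1)^2$) produces a basic synchronizing DFA $B$ on $n$ states whose alphabet is the set of all $3n^m-1$ non-identity maps on $\{1,\dots,n\}$ that restrict to one of $a,b,\mathrm{id}$ on $\{1,\dots,n-m\}$; as in that proof, every synchronizing word of $B$ restricts to a synchronizing word of $C_{n-m}$, so $B$ synchronizes in at least $(n-m-1)^2$ steps.

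Since $2\le k\le 3n^m-1$, delete $3n^m-1-k$ of the symbols of $B$, keeping a symbol $c^\ast$ that extends $a$ and satisfies $\{1,\dots,n\}c^\ast\subseteq\{1,\dots,n-m\}$, and a symbol $c^{\ast\ast}$ that extends $b$ (these are two distinct symbols of $B$; when $k=2$ one keeps exactly these two). The resulting DFA $B'$ is basic and has exactly $k$ symbols, and it is still synchronizing: $c^\ast$ sends the whole state set into $\{1,\dots,n-m\}$, on which $c^\ast$ and $c^{\ast\ast}$ act as $a$ and $b$ and which they preserve, so $c^\ast$ followed by a synchronizing word of $C_{n-m}$ spelled in $c^\ast,c^{\ast\ast}$ synchronizes $B'$. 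Passing to a sub-alphabet never shortens the shortest synchronizing word, since every synchronizing word of $B'$ is also one of $B$; therefore $d(n,k)$ is at least the shortest synchronizing word length of $B'$, which is at least $(n-m-1)^2=\bigl(n-\lceil(\log(k+1)-\log 3)/\log n\rceil-1\bigr)^2$.

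The subtle point is the direction of this last, monotonicity-flavoured step. One must resist padding a small-alphabet example up to $k$ symbols: $d(n,\cdot)$ is very far from non-decreasing (for instance $d(n,n^n-1)=1$, that alphabet containing the constant maps), which is exactly why the bound in the corollary decreases in~$k$. Trimming an alphabet is always harmless, so the only real thing to check is that the two retained symbols $c^\ast,c^{\ast\ast}$ keep $B'$ synchronizing, which is the verification carried out above.
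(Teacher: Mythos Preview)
Your proof is correct and follows essentially the same approach as the paper: define $m=\lceil\log_n\tfrac{k+1}{3}\rceil$, apply Theorem~\ref{theorem:dnk} with the \v{C}ern\'y automaton $C_{n-m}$ to build an $n$-state DFA on $3n^m-1\ge k$ symbols synchronizing in at least $(n-m-1)^2$ steps, and then delete symbols down to alphabet size~$k$ while retaining two that still guarantee synchronization. You are a bit more explicit than the paper in naming the two retained symbols $c^\ast,c^{\ast\ast}$ and verifying that $B'$ remains synchronizing, whereas the paper simply remarks that ``only two of these symbols are needed for synchronization''; your closing comment about the direction of the monotonicity step is also apt.
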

\begin{proof}
First note that $d(n,2)\geq (n-1)^2$, because \v{C}ern\'y's automata $C_n$ have two symbols and attain this length. Now let $n$ and $k$ be as in the corollary and let \[m=\left\lceil\frac{\log(k+1)-\log(3)}{\log(n)}\right\rceil.\]
Then $m\leq n-2$, so the automaton $C_{n-m}$ is well-defined, has two symbols and shortest synchronizing word length $(n-m-1)^2$. Theorem \ref{theorem:dnk} now gives
\[
d(n,3n^m-1)\geq d(n-m,2)\geq (n-m-1)^2.
\]
So there exists a synchronizing automaton on $n$ states with $3n^m-1\geq k$ symbols, and shortest synchronizing word length at least $(n-m-1)^2$. Furthermore, only two of these symbols are needed for synchronization. Therefore, we can just remove some symbols to reach the required alphabet size $k$, and the conclusion follows.
\qed\end{proof}

The corollary gives an easy lower bound for $d(n,k)$, which is however not sharp. For instance, for $k=3$, we find $d(n,3)\geq (n-2)^2$, which is obtained by taking $C_{n-1}$ and adding one extra state as in the proof of Theorem \ref{theorem:dnk}. It is easy to see that in fact $d(n,3)\geq (n-2)^2+1$, since the extra state can be used to extend the shortest synchronizing word by at least one letter. In the remainder of this section, we focus on more substantial improvements for small values of $k$. The following result gives further improvements for alphabet size 3, 4 and 5. Sequences of automata with these synchronization lengths were already known, but they all had only two symbols \citep{AVZY07,AVG12}.

\begin{theorem}\label{theorem:alphabet345}
For $n\geq 3$ and alphabet size $k=3, 4$ and $5$, the maximal shortest synchronizing word length $d(n,k)$ satisfies
\begin{align*}
d(n,3) &\geq n^2-3n+4,\\
d(n,4) &\geq n^2-3n+3,\\
d(n,5) &\geq n^2-3n+2.
\end{align*}
\end{theorem}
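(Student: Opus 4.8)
The plan is to give explicit constructions of DFAs on $n$ states with $3$, $4$, or $5$ symbols whose shortest synchronizing word has the stated length, modelled on \v{C}ern\'y-like automata. The backbone in each case should be a single letter $a$ acting as an $(n{-}1)$-cycle on states $1,\dots,n-1$ together with a fixed point at state $n$ (or some closely related near-cyclic permutation), so that repeatedly applying $a$ permutes most of the state set and forces any synchronizing word to be long; the extra letters are partial "merging" letters, each collapsing a small number of states, chosen so that the power-set automaton has a unique and long shortest path from $Q$ to a singleton. The target lengths $n^2-3n+4 = (n-1)^2 - (n-2)$, $n^2-3n+3=(n-1)^2-(n-1)$, and $n^2-3n+2=(n-1)^2-n$ suggest the construction should be thought of as a variant of $C_{n-1}$ (which synchronizes in $(n-2)^2$ steps) augmented by one extra state and one or two extra collapsing letters that each add roughly $n$ to the synchronization length; indeed $(n-2)^2 + (n-2) = n^2-3n+2$ and $(n-2)^2+(n-1) = n^2-3n+3$, so the three bounds differ only in how efficiently the extra state and symbols can be exploited.

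Concretely, for $k=3$ I would take states $\{1,\dots,n\}$, let $a$ be the cyclic shift on $\{1,\dots,n-1\}$ fixing $n$, let $b$ be the "\v{C}ern\'y reset" letter that maps $1\mapsto 2$ and fixes everything else (restricted to the cycle this gives $C_{n-1}$), and let $c$ be a letter whose job is to first pull state $n$ into the cycle and then, via the $C_{n-1}$ behaviour, let $b$ do the rest. For $k=4$ and $k=5$ I would split the work done by the single extra symbol $c$ among two or three symbols, trading a shorter word for a larger alphabet; the bookkeeping is such that each additional symbol removes one letter from the minimal word. After writing down $\delta$ explicitly, the proof has two halves: (i) exhibit a synchronizing word of exactly the claimed length by tracking $Qw$ through the power-set automaton, and (ii) prove no shorter word exists. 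For (i) one writes the word as an interleaving of blocks $a^{i}b$ (as in $C_n$) bracketed by a few occurrences of $c$, and checks by direct computation that after this word the image is a singleton while after every proper prefix it is not.

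For the lower bound (ii), the standard technique is to assign to each subset $V\subseteq Q$ a potential or "rank-type" function and show that no single letter can decrease it by more than a controlled amount, so that reaching a singleton from $Q$ requires at least the stated number of steps. In the \v{C}ern\'y setting the relevant quantity is something like $\sum_{q\in V}$ (gap to the next element of $V$ around the cycle), and one argues that the cyclic letter $a$ never decreases the number of elements, the reset letter $b$ decreases cardinality by at most one and only from very special sets, and the extra letters $c,\dots$ only help in limited, accountable ways. One then shows that the first phase — getting rid of the extra state $n$ and of the "slack" in the configuration — costs at least a linear number of steps, and the second phase reduces to the known $(n-2)^2$ lower bound for $C_{n-1}$, or to a direct argument for the augmented cycle. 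I expect the main obstacle to be exactly this lower bound: it requires a careful, essentially complete description of which subsets can be reached by short words and a monovariant that simultaneously controls the cyclic structure and the behaviour of every extra letter. In practice one may instead verify the shortest-word length for small $n$ by computer (the paper already relies on computer search elsewhere) and then prove the pattern for general $n$ by an inductive / monovariant argument along the cycle; I would present the explicit automata and the synchronizing words in full, and carry out the lower-bound monovariant argument in detail for the $k=3$ case, indicating the (routine but longer) modifications for $k=4,5$.
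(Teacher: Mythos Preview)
Your proposal is a plan rather than a proof: no automaton is actually written down and no lower bound is established, and you yourself flag the lower bound as the main obstacle. So as it stands there is a genuine gap.

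Beyond that, your intended construction and argument diverge from the paper's in two important ways, and both make your task harder than necessary.

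First, the backbone. You propose an $(n{-}1)$-cycle on $\{1,\dots,n{-}1\}$ with a fixed extra state $n$, so that the restriction is $C_{n-1}$. The paper does the opposite: it takes the full $n$-cycle $a$ (so the restriction to $\{a,b\}$ is $C_n$ itself) and adds three further letters $c,d,e$ that act like $a$ on most states but each merge one pair among $\{1,2,3\}$. Your $C_{n-1}$-plus-one-state picture is essentially the mechanism behind Theorem~\ref{theorem:dnk}, which only yields $d(n,3)\ge (n-2)^2$; getting the extra linear term $n-2$ out of a single ``pull $n$ into the cycle'' letter is exactly the part you leave unspecified, and it is not clear your sketch can reach $n^2-3n+4$ rather than $n^2-4n+O(1)$.

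Second, the lower bound. You plan a potential/monovariant argument tracking gaps around the cycle. The paper avoids this entirely. It shows, by local rewriting rules such as $qab=qc$, $qdb=qbc$, etc., that any shortest synchronizing word for the $5$-letter automaton $A$ can be rewritten to use only the two letters $c,d$; the resulting $2$-letter automaton is one whose synchronization length $n^2-3n+2$ is already known from \cite{AVG12}. This gives $d(n,5)\ge n^2-3n+2$ with no new combinatorics. Then the steps to $k=4$ and $k=3$ are each a one-line argument: every shortest synchronizing word of $A$ must \emph{begin} with $d$ (so dropping $d$ costs at least one letter), and every shortest synchronizing word must \emph{end} with $c$ (so dropping $c$ costs one more). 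Matching upper bounds are given by explicit words. No monovariant, no induction, no computer verification is needed. If you want to salvage your approach, you would have to both pin down an explicit automaton and carry out the rank-type lower bound in full; the paper's route is substantially shorter.
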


\begin{proof}
Consider the basic automaton $A$ with state set $Q = \left\{1,\ldots,n\right\}$ alphabet $\Sigma = \left\{a,b,c,d,e\right\}$, where $qx$ for each state $q$ and symbol $x$ is defined as follows:
\[
\renewcommand{\arraystretch}{1.4}
\begin{array}{|@{\quad}c@{\quad}|@{\quad}c@{\quad}c@{\quad}c@{\quad}c@{\quad}|}
\hline
 & q=1 & q=2 & q=3 & q \geq 4 \\
\hline
x=a & 2 & 3 & 4 & (q+1) \bmod n \\[-5pt]
x=b & 1 & 3 & 3 & q \\[-5pt]
x=c & 3 & 3 & 4 & (q+1) \bmod n \\[-5pt]
x=d & 2 & 4 & 4 & (q+1) \bmod n \\[-5pt]
x=e & 3 & 4 & 4 & (q+1) \bmod n \\
\hline
\end{array}
\]
The automaton $A$ is depicted below. We will prove the following three claims:
\begin{enumerate}
	\item $A$ has shortest synchronizing word length $n^2-3n+2$,
	\item $A^{-d}:=\left\{Q,\left\{a,b,c,e\right\},\delta\right\}$ has shortest synchronizing word length $n^2-3n+3$,
	\item $A^{-cd}:=\left\{Q,\left\{a,b,e\right\},\delta\right\}$ has shortest synchronizing word length $n^2-3n+4$.
\end{enumerate}

\begin{tikzpicture}[-latex ,node distance =3 cm and 3cm ,on grid ,
semithick ,
state/.style ={ circle ,top color =white , bottom color = white!20 ,
	draw, black , text=black , minimum width =.7 cm}]

\node[state] (n) {$n$};
\node[state] (1) at ($ (n) +  (60:2) $) {1};
\node[state] (2) at ($ (1) +  (30:2.5) $) {2};
\node[state] (3) at ($ (2) +  (0:2.5) $) {3};
\node[state] (4) at ($ (3) +  (-30:2.5) $) {4};
\node[state] (5) at ($ (4) +  (-60:2) $) {5};

\node[state,draw, white , text=white] (6) at ($ (5) +  (-90:2) $) {};
\node[state,draw, white , text=white] (7) at ($ (n) +  (-90:2) $) {};
\node[state,draw, white , text=black] (dummy) at ($ (2) +  (0:1.25) $) {};
\node[state,draw, white , text=black] (dummy2) at ($ (dummy) +  (-90:4.5) $) {\ldots};

\path (1) edge node[above left] {$a,d$} (2);
\path (2) edge node[above] {$a,b,c$} (3);
\path (3) edge node[above right] {$a,c,d,e$} (4);
\path (4) edge node[right] {$a,c,d,e$} (5);
\path (5) edge node[left] {$a,c,d,e$} (6);
\path (7) edge node[right] {$a,c,d,e$} (n);
\path (n) edge node[left] {$a,c,d,e$} (1);

\path (1) edge node[below] {$c,e$} (3);
\path (2) edge node[below] {$d,e$} (4);

\path (n) edge[in=140,out=190,loop] node[left]{$b$} (n);
\path (1) edge[in=110,out=160,loop] node[left]{$b$} (1);
\path (3) edge[in=50,out=100,loop]node[above] {$b$} (3);
\path (4) edge[in=20,out=70,loop] node[right]{$b$} (4);
\path (5) edge[in=-10,out=40,loop] node[right]{$b$} (5);
\end{tikzpicture}

When restricted to the symbols $a$ and $b$, $A$ is equal to \v{C}ern\'y's automaton $C_n$, so $A$ is synchronizing. Let $w$ be a shortest synchronizing word for $A$. We will show that there exists a shortest synchronizing word containing only the letters $c$ and $d$.

Note that for all states $q$, the following is true:
\[
qab = qc,\quad qbb=qb,\quad qcb=qc,\quad qdb = qbc \quad\text{and}\quad qeb = qbc.
\]
Therefore, if $w_iw_j$ is a factor of $w$ and $w_j=b$, then $w_i\in\left\{d,e\right\}$ and we can replace $w_iw_j$ by the string $bc$. Doing this repeatedly, we can assume that $b$ does not occur in $w$ (if the first letter of $w$ is a $b$, it can be replaced by $c$ since $Qb=Qc$).

If $S\subseteq Q$ and $1\not\in S$, then $Sa = Sc$ and $Se=Sd$. If $S\subseteq Q$ and $2\not\in S$, then $Sa = Sd$ and $Se=Sc$. If $S\subseteq Q$ and $\left\{1,2\right\}\subseteq S$, then $Sc\subset Sa$ and $Sc\subseteq Se$. These observations show that every occurrence of $a$ or $e$ in $w$ can be replaced by $c$ or $d$. Thus, the restriction $A_{cd}$ of $A$ to the symbols $c$ and $d$ has the same synchronizing word length as $A$ itself. This restriction was shown to have shortest synchronizing word length $n^2-3n+2$ in \citep{AVG12}, establishing the lower bound $d(n,5)\geq n^2-3n+2$.

To prove the second claim, suppose $w=w_1w_2\ldots$ is synchronizing for $A$ with $w_1\in\left\{b,c,e\right\}$. Then there exists a shorter synchronizing word: if $w_2\in\left\{b,c,e\right\}$, then $Qw_1w_2=Qw_1$ and if $w_2\in\left\{a,d\right\}$, then $Qw_1w_2=Qd$. Clearly a shortest synchronizing word also does not start with $a$ because $Qa=Q$. Therefore, every shortest synchronizing word for $A$ starts with the symbol $d$, proving that the shortest synchronizing word for $A^{-d}$ has length at least $n^2-3n+3$.

The word $(ba^{n-1})^{n-2}b$ is synchronizing for $C_n$ and therefore also for $A^{-d}$. Replacing all occurrences of $ab$ by $c$ proves that $b(a^{n-2}c)^{n-2}$ is synchronizing for $A^{-d}$, showing that the shortest synchronizing word for $A^{-d}$ has length at most $n^2-3n+3$. Hence, $d(n,4)\geq n^2-3n+3$.

Assume that $w=w_1\ldots w_{m-1}w_m$ is a shortest synchronizing word for $A$. This means that $|Qw_1\ldots w_{m-1}|\geq 2$ and $|Qw_1\ldots w_{m}|=1$. Obviously, $w_m\neq a$. If $w_m\in\left\{b,d,e\right\}$, then the only possibility is $Qw_1\ldots w_{m-1}=\left\{2,3\right\}$ and consequently $w_{m-1}=a$. It follows that in this case $Qw_1\ldots w_{m-2} = \left\{1,2\right\}$. But then $w_1\ldots w_{m-2}c$ is synchronizing as well, contradicting the assumption. Therefore, every shortest synchronizing word for $A$ ends with the symbol $c$, proving that the shortest synchronizing word for $A^{-cd}$ has length at least $n^2-3n+4$.

We will finish the proof by constructing a synchronizing word of this length. Observe that $1ea^{n-2}=1$, $2ea^{n-2}=2$ and $qea^{n-2}=q-1$ for $3\leq q\leq n$. This implies that $Q(ea^{n-2})^{n-2}= \left\{1,2\right\}$. Since $\left\{1,2\right\}ae = \left\{4\right\}$, it follows that $w = (ea^{n-2})^{n-2}ae$ is synchronizing and it has length $n^2-3n+4$.
\qed\end{proof}

As the shortest synchronizing word for $A^{-cd}$ only contains the symbols $a$ and $e$, the automaton
$A^{-bcd} := \left\{Q,\left\{a,e\right\},\delta\right\}$ has shortest synchronizing word length
$n^2-3n+4$ as well. From the proof it also follows that $A^{-c}:= \left\{Q,\left\{a,b,d,e\right\},\delta\right\}$ has shortest synchronizing word length $n^2-3n+3$.

We note here that the automata constructed in Theorem \ref{theorem:alphabet345} are extensions of
\v{C}ern\'y's automata $C_n$ with synchronizing length close to $(n-1)^2$. In Section \ref{seccn},
we will show that all possible extensions of $C_n$ have synchronizing length strictly less than $(n-1)^2$.

Experimental results reported in \citep{T06} and \citep{AVG12} give evidence that 
for $n \geq 7$, no two-letter automata exist with synchronization length strictly between $n^2-3n+4$ and $(n-1)^2$. Moreover, no sequences of automata are known with synchronization length strictly between $n^2-4n+7$ and $n^2-3n+2$. The  synchronization lengths between these two gaps correspond exactly to our lower bounds for alphabet size 3, 4 and 5. Moreover, for $n = 7,8,9,10$, the value $n^2-3n+4$ also matches the value just below the gap experimentally found by Trahtman. An open question is if sequences of automata with synchronization length above $n^2-4n+O(1)$ and at least $6$ symbols exist.

The constructions of Theorem \ref{theorem:alphabet345} can be used to improve on the bound of
Corollary \ref{cor:large_alphabet} for certain alphabet sizes.
For example, if $3n\leq k\leq 6n-1$, Corollary \ref{cor:large_alphabet} gives
$d(n,k)\geq (n-3)^2 = n^2-6n+9$. However, when we apply Theorem \ref{theorem:dnk}
with $k=5$ and $m=1$, we obtain
\[
d(n,6n-1) \geq d(n-1,5) = n^2-5n+6.
\]
The same lower bound also applies to other alphabet sizes $k$ for which $3n\leq k\leq 6n-1$.

\section{Small DFAs}
\label{sec34}

In this section we exploit computer support to investigate all DFAs on $n=2,3,4,5,6$ states having long shortest synchronization length.
As the number of DFAs on $n$ states grows like $2^{n^{n}}$, an exhaustive search is a non-trivial affair,
even for small values of $n$. The problem is that the alphabet size in a basic DFA can be as large as
$n^n-1$, while in earlier work only DFAs with at most four symbols were checked by Trahtman \citep{T06}.
In \citep{DZ17} we gave a full investigation of all critical DFAs on 3 and 4 states, without restriction on
the alphabet size. In \citep{BDZ17} we extended this to 5 and 6 states. In this paper we extend this work further by
not restricting to critical DFAs, that is, having synchronization length $(n-1)^2$, but investigate longest possible
synchronization lengths not only depending on the number $n$ of states, but also depending on the alphabet size $k$.
Before giving the results first we explain the underlying ideas of our algorithm;
following the same lines as in \citep{BDZ17}. We use the following terminology. A DFA $\mathcal{B}$ obtained by adding
some symbols to a DFA $\mathcal{A}$ will be called an \emph{extension} of $\mathcal{A}$. If $\mathcal{A} =
(Q,\Sigma,\delta)$, then $S\subseteq Q$ will be called \emph{reachable} if there exists a word
$w\in\Sigma^*$ such that $Qw=S$. We say
that $S$ is \emph{reducible} if there exists a word $w$ such that $|Sw|<|S|$, and we call $w$ a
\emph{reduction word} for $S$. Our algorithm is mainly based on the following immediate observation:
\begin{lemma}\label{property:extension}
If a DFA $\mathcal{A}$ is synchronizing, and $\mathcal{B}$ is an extension of $\mathcal{A}$, then
$\mathcal {B}$ is synchronizing as well and its shortest synchronizing word is at most as long as the
shortest synchronizing word for $\mathcal{A}$.
\end{lemma}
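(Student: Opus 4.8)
The statement to prove is Lemma~\ref{property:extension}: if $\mathcal{A}$ is synchronizing and $\mathcal{B}$ is an extension, then $\mathcal{B}$ is synchronizing with a shortest synchronizing word no longer than that of $\mathcal{A}$.

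\medskip

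The plan is to argue directly from the definitions, since an extension only adds new symbols without touching the action of the old ones. First I would fix notation: write $\mathcal{A} = (Q,\Sigma,\delta)$ and let $\mathcal{B} = (Q,\Sigma',\delta')$ be an extension, so $\Sigma \subseteq \Sigma'$ and $\delta'$ restricted to $Q \times \Sigma$ agrees with $\delta$. The key observation is that for every word $w \in \Sigma^*$ and every state $q \in Q$, the state reached by processing $w$ from $q$ is the same in both automata; this follows by a trivial induction on the length of $w$, using that the two transition functions coincide on the old alphabet. Consequently $Qw$ (computed in the power set automaton) is identical in $\mathcal{A}$ and in $\mathcal{B}$ for any $w \in \Sigma^*$.

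\medskip

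Now let $w \in \Sigma^*$ be a shortest synchronizing word for $\mathcal{A}$, so $Qw$ is a singleton in $\mathcal{A}$, hence also a singleton in $\mathcal{B}$ by the previous paragraph. Since $w$ is also a word over the larger alphabet $\Sigma'$, it is a synchronizing word for $\mathcal{B}$. Therefore $\mathcal{B}$ is synchronizing, and its shortest synchronizing word has length at most $|w|$, which is the length of the shortest synchronizing word for $\mathcal{A}$. This completes the argument.

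\medskip

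There is essentially no obstacle here: the lemma is "immediate" precisely because adding symbols can only enlarge the set of available reduction words and never removes any behaviour. The only point requiring a word of care is to state explicitly that processing an old word gives the same result in both automata, so that a synchronizing word for $\mathcal{A}$ transfers verbatim to $\mathcal{B}$; everything else is a one-line consequence. (The lemma does not claim the shortest synchronizing word for $\mathcal{B}$ uses only old symbols — it may well be strictly shorter using a new symbol — only that it is no longer than the one for $\mathcal{A}$, which is all we need.)
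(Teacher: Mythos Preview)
Your proof is correct and is exactly the direct argument the paper has in mind: the paper calls this lemma an ``immediate observation'' and gives no proof at all, so your spelling-out of why a synchronizing word for $\mathcal{A}$ remains one for $\mathcal{B}$ is precisely what is needed.
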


The algorithm roughly runs as follows. We search for DFAs on $n$ states with synchronization length $s$, so a DFA is
discarded if it synchronizes faster, or if it does not synchronize at all. For a given DFA
$\mathcal{A} = (Q,\Sigma,\delta)$ which is not yet discarded or investigated, the algorithm does the
following:
\begin{enumerate}
\item If $\mathcal{A}$ is synchronizing with synchronization length $s$, we have identified an example we are
searching for.
\item If $\mathcal{A}$ is synchronizing with synchronization length $< s$, it is discarded, together with all its
possible extensions (justified by Lemma \ref{property:extension}).
\item If $\mathcal{A}$ is not synchronizing, then find an upper bound $L$ for how fast any synchronizing
extension of $\mathcal{A}$ will synchronize (see below). If $L < (n-1)^2$, then discard $\mathcal{A}$
and all its extensions. Otherwise, discard only $\mathcal{A}$ itself.
\end{enumerate}

The upper bound $L$ for how fast any synchronizing extension of $\mathcal{A}$ will synchronize, is
found by analyzing distances in the directed graph of the power automaton of $\mathcal{A}$. For
$S,T\subseteq Q$, the distance from $S$ to $T$ in this graph is equal to the length of the shortest
word $w$ for which $Sw=T$, if such a word exists.
We compute $L$ as follows:
\begin{enumerate}
\item Determine the size $|S|$ of a smallest reachable set. Let $m$ be the minimal distance from $Q$
to a set of size $|S|$.
\item For each $k\leq |S|$, partition the collection of irreducible sets of size $k$ into strongly
connected components. Let $m_k$ be the number of components plus the sum of their diameters.
\item For each reducible set of size $k\leq |S|$, find the length of its shortest
reduction word. Let $l_k$ be the maximum of these lengths.
\item Now note that a synchronizing extension of $\mathcal{A}$ will have a synchronizing
word of length at most
\[ L \; = \; m+\sum_{k=2}^{|S|}(m_k+l_k).  \]
\end{enumerate}

The algorithm performs a depth-first search. So after investigating a DFA, first all its extensions (not
yet considered) are investigated before moving on.
Still, we can choose which extension to pick first. We would like to choose
an extension that is likely to be discarded immediately together with all its extensions. Therefore,
we apply the following heuristic: for each possible extension $\mathcal{B}$ by one symbol, we count how
many pairs of states in $\mathcal{B}$ would be reducible. The extension for which this is maximal is
investigated first. The motivation is that a DFA is synchronizing if and only if each pair is reducible \citep{C64}.

Finally, we note that we have described a primitive version of the algorithm here. The algorithm
which has actually been used also takes symmetries into account, making it almost $n!$ times faster.
For the source code, we refer to \citep{BDZ17a}.

In the rest of this section we consecutively present the results for $n=2,3,4,5,6$ states. For every of these number of states
we explicitly present all basic critical DFAs, and give a table of the number of basic DFAs of synchronization length $s$ for
various values of $s$, all up to symmetry.

\subsection{Two States}
The case for two states is quite degenerate as every synchronizing DFA synchronizes by a single symbol, but for completeness we include it.
If the two states are 1, 2, then there are three possible symbols that are not the identity: $a$ mapping both states to 1, $b$ mapping both states to 2,
and $c$ swapping the two states. Every non-empty set of these symbols yields a synchronizing DFA, except for $\{c\}$. So with one symbol we have the two
symmetrical cases $\{a\}$ and $\{b\}$, yielding one up to symmetry. For two symbols we have the two symmetrical cases $\{a,c\}$ and $\{b,c\}$, and
$\{a,b\}$, yielding two up to symmetry. And finally we have $\{a,b,c\}$ with three symbols. As we will do for all $n=2,3,4,5,6$, we put these results in a table,
counting for every number of symbols (alph. size) and every minimal synchronization length (sync.) the number of corresponding basic DFAs, up to symmetry.
So for $n=2$ this yields the following table:

\begin{center}
\renewcommand{\arraystretch}{1.4}
\begin{tabular}{|r|r|}
\hline
alph.\@ & sync.\@ \\[-5pt]
size & 1 \\
\hline
1 & 1 \\[-5pt]
2 & 2 \\[-5pt]
3 & 1 \\
\hline
total & 4 \\
\hline
\end{tabular}
\end{center}

\subsection{Three States}

For three states our algorithm yields the following table for all synchronizing basic DFAs:

\begin{center}
\renewcommand{\arraystretch}{1.4}
\begin{tabular}{|r|rrrr|}
\hline
alph.\@ & sync.\@ & sync.\@ & sync.\@ & sync.\@ \\[-5pt]
size & 4 & 3 & 2 & 1 \\
\hline
1 & & & 1 & 1 \\[-5pt]
2 & 2 & 4 & 28 & 13 \\[-5pt]
3 & 7 & 32 & 249 & 145 \\[-5pt]
4 & 5 & 85 & 1410 & 1028 \\[-5pt]
5 & 1 & 107 & 5527 & 5394 \\[-5pt]
6 & & 81 & 16833 & 21610 \\[-5pt]
7 & & 39 & 40917 & 68916 \\[-5pt]
8 & & 10 & 81881 & 178855 \\[-5pt]
9 & & 2 & 136373 & 384897 \\[-5pt]
10 & & & 190932 & 695038 \\[-5pt]
11 & & & 225589 & 1062915 \\[-5pt]
12 & & & 225589 & 1384909 \\[-5pt]
13 & & & 190932 & 1543472 \\[-5pt]
14 & & & 136375 & 1474123 \\[-5pt]
15 & & & 81891 & 1206613 \\[-5pt]
16 & & & 40956 & 845014 \\[-5pt]
17 & & & 16914 & 504358 \\[-5pt]
18 & & & 5638 & 255108 \\[-5pt]
19 & & & 1508 & 108364 \\[-5pt]
20 & & & 303 & 38221 \\[-5pt]
21 & & & 48 & 10984 \\[-5pt]
22 & & & 5 & 2531 \\[-5pt]
23 & & & 1 & 447 \\[-5pt]
24 & & & & 61 \\[-5pt]
25 & & & & 6 \\[-5pt]
26 & & & & 1 \\
\hline
total & 15 & 360 & 1399900 & 9793024 \\
\hline
\end{tabular}
\end{center}

So no super-critical DFAs exist (sync. $> 4$), and there are exactly 15 basic critical DFAs. These will be investigated in Theorem \ref{thm3st}.
Before we do so, we recall the minimal critical DFAs as presented in \citep{T06} on three states, apart from $C_3$:

\begin{center}
\includegraphics[scale=0.25]{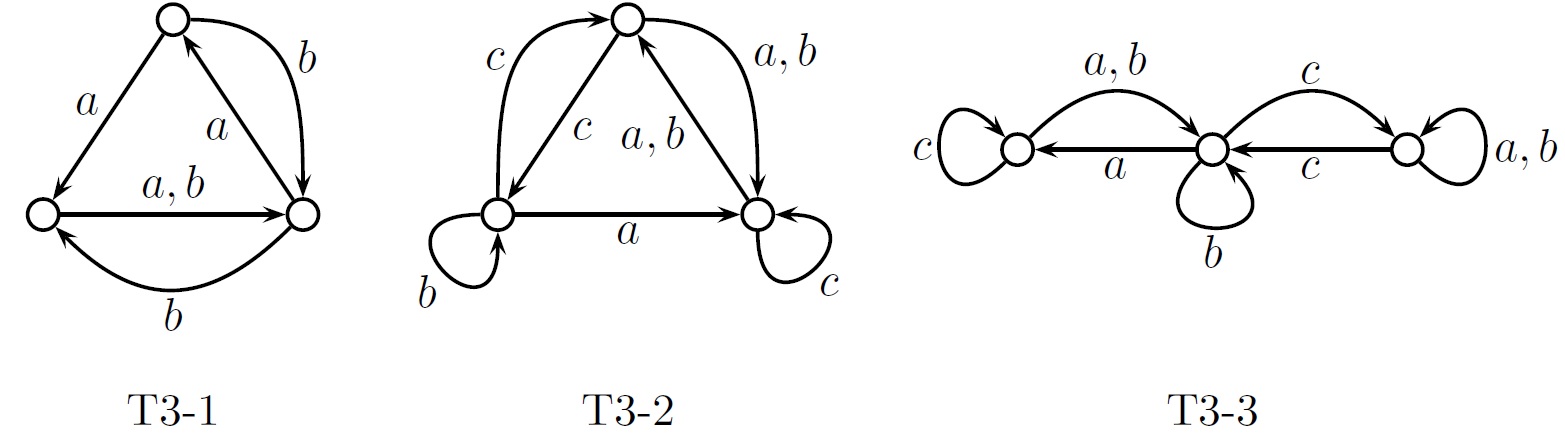}
\end{center}

We call them T3-1, T3-2 and T3-3, as they were found by Trahtman.
They all have a unique synchronizing word of length 4, being $baab$, $acba$, $bacb$,
respectively.

They can be combined to a single DFA A3 on five symbols $a,b,c,d,e$, depicted as follows.

\begin{minipage}{6cm}
\includegraphics[scale=0.24]{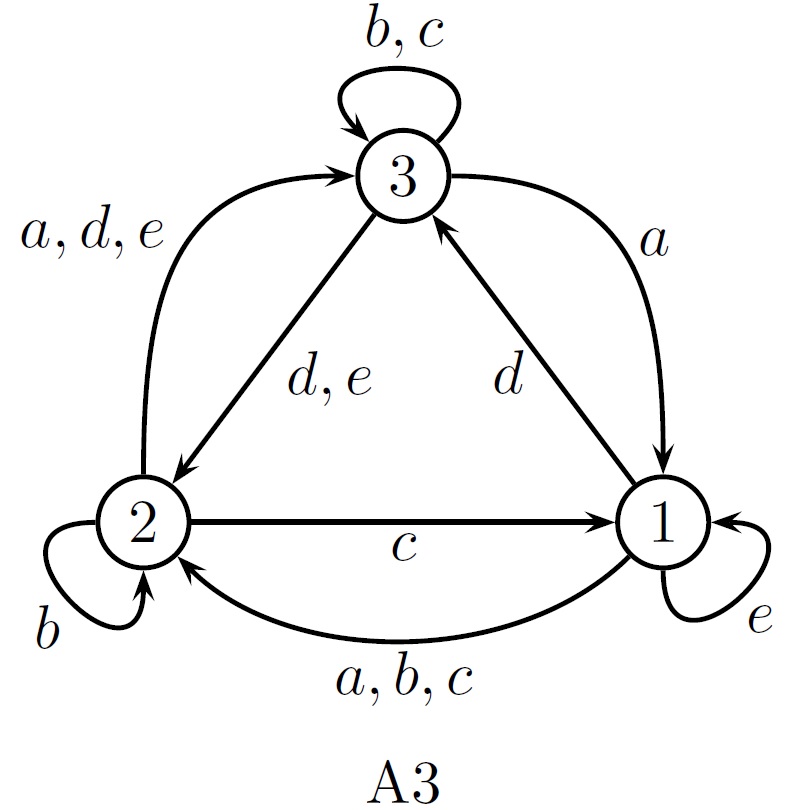}
\end{minipage}
\begin{minipage}{6cm}
\includegraphics[scale=0.24]{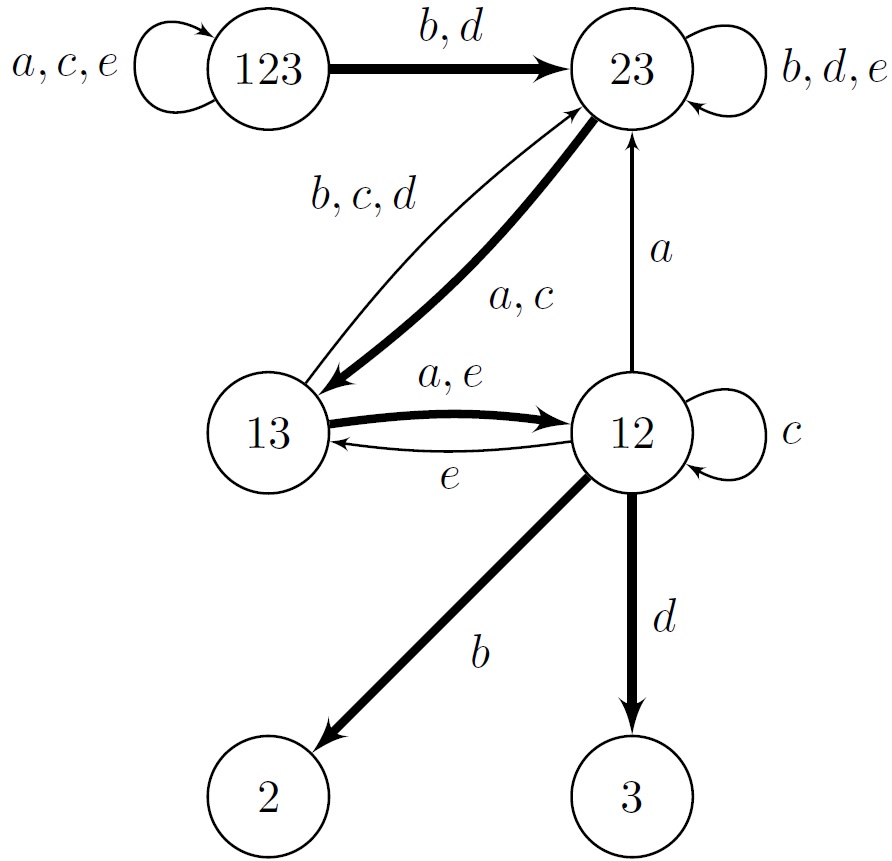}
\end{minipage}

\vspace{3mm}

Observe that A3 restricted to $a,b$ coincides with $C_3$,
A3 restricted to $a,d$ coincides with T3-1,
A3 restricted to $c,d,e$ coincides with T3-2 and
A3 restricted to $b,c,e$ coincides with T3-3, so exactly the four minimal critical automata on
three states from \citep{T06}.
On the other hand, as all minimal basic critical DFAs on three states are contained in
A3, A3 is the only maximal  basic critical DFA on three states. It admits 16 synchronizing words of
length 4, expressed by the regular expression $(b+d)(a+c)(a+e)(b+d)$, where state 2 is the
synchronizing state if the word ends in $b$ and state 3 if the word ends in $d$.

This follows from the analysis of the power set automaton of A3 as depicted right from A3 itself (we stopped when a singleton was reached).
Here the shortest paths from 123 to a singleton are indicated by fat arrows.

The relationship between A3 and critical DFAs is given in the following theorem.

\begin{theorem}
\label{thm3st}
No super-critical DFAs on three states exist, and a basic DFA on three states is critical if and only if
up to isomorphism it is one of the
15 automata that can be obtained from A3 by removing zero or more symbols and
keeping at least one of the sets $\{a,b\}, \{a,d\}, \{b,c,e\}, \{c,d,e\}$ of symbols.
\end{theorem}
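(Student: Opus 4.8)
The plan is to establish the theorem in two directions, using the classification table as the backbone for the counting and a structural argument for the "only if" direction. First I would record the easy consequence of the computation: the table in the Three States subsection shows that the largest synchronization length attained by any basic DFA on three states is $4=(3-1)^2$, so no super-critical DFA exists, and there are exactly $15$ basic critical DFAs up to symmetry (the "sync.\ 4" column sums to $15$). This reduces the theorem to showing that the $15$ critical DFAs are precisely the subsets of $\mathrm{A3}$ that retain at least one of the four ``generating'' symbol sets $\{a,b\},\{a,d\},\{b,c,e\},\{c,d,e\}$.

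For the direction ``such a restriction of $\mathrm{A3}$ is critical,'' I would argue as follows. If a restriction $\mathcal{B}$ of $\mathrm{A3}$ contains one of the four sets, then $\mathcal{B}$ is an extension of one of $C_3$, T3-1, T3-2, T3-3, each of which is synchronizing with shortest synchronizing word length $4$; hence by Lemma \ref{property:extension}, $\mathcal{B}$ synchronizes in at most $4$ steps. Since $\mathcal{B}$ is itself a restriction of $\mathrm{A3}$, and $\mathrm{A3}$ has shortest synchronizing word length $4$, no synchronizing word of $\mathcal{B}$ can be shorter than $4$ (a synchronizing word for $\mathcal{B}$ is also one for $\mathrm{A3}$). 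Thus $\mathcal{B}$ is critical. Next I would count the subsets: $\mathrm{A3}$ has five symbols $a,b,c,d,e$, so $2^5=32$ subsets, but we must (i) keep only \emph{basic} DFAs (discard the empty set; all five symbols of $\mathrm{A3}$ are non-identity and pairwise distinct, so every nonempty subset is basic) and (ii) keep only those containing at least one of $\{a,b\},\{a,d\},\{b,c,e\},\{c,d,e\}$, and (iii) count up to isomorphism. I would verify by direct enumeration — using the symmetry group of $\mathrm{A3}$ acting on its symbols, which can be read off from the picture — that exactly $15$ isomorphism classes survive, matching the table.

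For the harder direction — that \emph{every} basic critical DFA on three states is isomorphic to such a restriction of $\mathrm{A3}$ — I would give a self-contained combinatorial argument on the power set automaton of a hypothetical critical DFA $\mathcal{A}$ on $Q=\{1,2,3\}$. A synchronizing word of length $4$ must take $Q$ through sets of sizes $3,?,?,?,1$; since no symbol can reduce $|Q|=3$ to $1$ in one step (that would force a synchronizing word of length at most, say, $3$ after reordering), and a critical word is shortest, one analyses the possible chains $3\to 2\to 2\to 2\to 1$ or $3\to 2\to\cdots$. The key point is that a shortest synchronizing word of length $(n-1)^2=4$ for $n=3$ forces a very rigid structure: the first symbol must map $Q$ onto a $2$-set, then one needs three more steps among the $2$-sets $\{1,2\},\{1,3\},\{2,3\}$ before collapsing, and the only ways to realize ``$2$-set $\to$ $2$-set'' transitions that are not immediately collapsing pin down the action of the symbols on the three $2$-sets. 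Carrying this case analysis through — there are few $2$-sets and few possible transition patterns — shows that the symbols used in any shortest synchronizing word must act (up to relabeling states) exactly as some subset of $\{a,b,c,d,e\}$ of $\mathrm{A3}$ does, and moreover the remaining symbols of $\mathcal{A}$, being non-identity and not increasing synchronization length, must also be among $\{a,b,c,d,e\}$.

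The main obstacle I anticipate is this last step: ruling out that a critical $\mathcal{A}$ contains a symbol \emph{not} appearing in $\mathrm{A3}$. A priori such a symbol could be ``harmless'' (not shortening the synchronizing word) yet not one of $a,b,c,d,e$. One must show every non-identity map on $\{1,2,3\}$ that is not (a relabeling of) one of $a,b,c,d,e$ either (a) makes $\mathcal{A}$ synchronize in fewer than $4$ steps, or (b) is incompatible with the rigid structure forced above. Since there are only $3^3-1=26$ non-identity maps on three states, this is a finite check, but doing it cleanly — ideally by grouping the maps by their image size and their action on the three $2$-sets, and invoking the rigidity of the length-$4$ chain — is where the real work lies. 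I would organize it so that each ``forbidden'' symbol is shown to create a shortcut in the power set automaton (a shorter path from $Q$ to a singleton), contradicting criticality, and conclude that $\mathcal{A}$ embeds into $\mathrm{A3}$ and hence, by the first direction, is one of the $15$ listed DFAs.
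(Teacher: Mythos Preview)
Your outline is correct and follows the same route as the paper's self-contained proof: analyse the power set automaton, pin down the chain $\{1,2,3\}\to\{2,3\}\to\{1,3\}\to\{1,2\}\to\text{singleton}$ (after normalisation), and deduce which symbols can occur. Two points where the paper is tighter than your plan:

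First, you split the analysis into ``symbols appearing in a shortest synchronizing word'' versus ``remaining symbols'', and flag the latter as the main obstacle requiring a check over all $26$ non-identity maps. The paper avoids this split entirely. Once the chain above is fixed, the fact that $4$ is the \emph{shortest} synchronization length forces, for \emph{every} symbol $x$ of the DFA simultaneously: (i) $\{1,2,3\}x\in\{\{1,2,3\},\{2,3\}\}$, (ii) $\{2,3\}x\in\{\{2,3\},\{1,3\}\}$, and (iii) $\{1,3\}x$ is not a singleton --- because a violation of any of these would create a shortcut. A short case split on (ii) then shows that exactly six maps satisfy all three conditions: the five symbols of A3 and the identity. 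So the ``harmless extra symbol'' worry dissolves without a $26$-case check.

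Second, you propose to count the $15$ sub-DFAs ``up to isomorphism'' using a symmetry group on the symbols. In fact no quotient is needed: straight inclusion--exclusion over subsets of $\{a,b,c,d,e\}$ containing at least one of $\{a,b\},\{a,d\},\{b,c,e\},\{c,d,e\}$ already gives $8+8+4+4-(4+2+1+1+2+2)+(1+1+1+1)-1=15$, and these $15$ sub-DFAs are pairwise non-isomorphic. Relatedly, you do not need the computational table at all: the structural argument already yields both ``no super-critical'' (the chain has length exactly $4$) and the exact list, so the proof is fully self-contained.
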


This theorem follows from the results of our algorithm, but now we also give a self-contained proof that does not require
computer support.

\begin{proof}
Let $1,2,3$ be the three states. The automaton has a shortest synchronizing word of length $\geq
4$ if and only if the shortest path from $\{1,2,3\}$ to a singleton in the power set automaton has
length $\geq 4$.
There is a step from $\{1,2,3\}$ to a smaller set. Since the length of the shortest path is
$\geq 4$, this smaller set
is not a singleton, so it is a pair; without loss of generality we may assume this is $\{2,3\}$.

Let
$b$ be the first symbol of a shortest synchronizing word, so $\{1,2,3\} \stackrel{b}{\to}
\{2,3\}$. Since the shortest path from $\{2,3\}$ to a singleton consists of at least three steps, it
meets the other two pairs and consists of exactly three steps, yielding shortest synchronizing
word length 4. Maybe after swapping 2 and 3 we may assume this shortest path is
$\{1,2,3\} \stackrel{b}{\to} \{2,3\} \to \{1,3\} \to \{1,2\} \to \mbox{singleton}$.
As it is the shortest path, we conclude that for every symbol $a$ we have
\begin{enumerate}
\item either $\{1,2,3\} \stackrel{a}{\to} \{1,2,3\}$ or $\{1,2,3\} \stackrel{a}{\to} \{2,3\}$,
\item either $\{2,3\} \stackrel{a}{\to} \{2,3\}$ or $\{2,3\} \stackrel{a}{\to} \{1,3\}$, and
\item not $\{1,3\} \stackrel{a}{\to}$ singleton.
\end{enumerate}
We distinguish the cases of 2\@. Suppose first that $\{2,3\} \stackrel{a}{\to} \{2,3\}$.
Then 1.\@ becomes void, and for the images of $2$ and $3$, there are two options. For the image
of $1$, there are two options as well, so there are $4$ symbols, which are the symbols
$b$, $d$ and $e$ in A3, and the identity.
Suppose next that $\{2,3\} \stackrel{a}{\to} \{1,3\}$. Then 1.\@ can only be met if
$a$ permutes $\{1,2,3\}$, so there are two options, corresponding to the symbols
$a$ and $c$ in A3.

So for all DFAs being a sub-automaton of A3 it holds that if it is synchronizing, then the
shortest synchronizing word length is 4.
Restricting A3 to either $\{a,b\}$, $\{a,d\}$, $\{b,c,e\}$ or $\{c,d,e\}$ yields one of the
known synchronizing DFAs, so every extension is synchronizing too. Conversely, it is easily
checked that all of these restrictions are minimal: all symbols are required for
synchronization. This concludes the proof.
\qed
\end{proof}

As a consequence of Theorem \ref{thm3st} apart from the four minimal critical DFAs that were
known on three states, we obtain 11 more that are not minimal.

\subsection{Four States}
For four states our algorithm yields the following table for all synchronizing basic DFAs with minimal synchronization
length $\geq 6$:

\begin{center}
\renewcommand{\arraystretch}{1.4}
\begin{tabular}{|r|rrrr|}
\hline
alph.\@ & sync.\@ & sync.\@ & sync.\@ & sync.\@ \\[-5pt]
size & 9 & 8 & 7 & 6 \\
\hline
1 & & & & \\[-5pt]
2 & 2 & 5 & 11 & 21 \\[-5pt]
3 & 5 & 57 & 187 & 641 \\[-5pt]
4 & 4 & 146 & 979 & 5585 \\[-5pt]
5 & 1 & 151 & 2866 & 25538 \\[-5pt]
6 & & 72 & 5974 & 75372 \\[-5pt]
7 & & 15 & 9580 & 157414 \\[-5pt]
8 & & 1 & 12136 & 243850 \\[-5pt]
9 & & & 12239 & 287208 \\[-5pt]
10 & & & 9838 & 260468 \\[-5pt]
11 & & & 6286 & 182453 \\[-5pt]
12 & & & 3162 & 98120 \\[-5pt]
13 & & & 1230 & 39867 \\[-5pt]
14 & & & 360 & 11851 \\[-5pt]
15 & & & 76 & 2444 \\[-5pt]
16 & & & 11 & 312 \\[-5pt]
17 & & & 1 & 20 \\
\hline
total & 12 & 447 & 64936 & 1391164 \\
\hline
\end{tabular}
\end{center}

For alphabet size $\geq 18$ there are no DFAs with shortest synchronization length $\geq 6$.

In order to investigate all 12 (basic) critical DFAs on four states, first we give the minimal
critical DFAs as presented in \citep{T06} on four states, apart from $C_4$.
\begin{wrapfigure}{r}{5.5cm}

\includegraphics[scale=0.25]{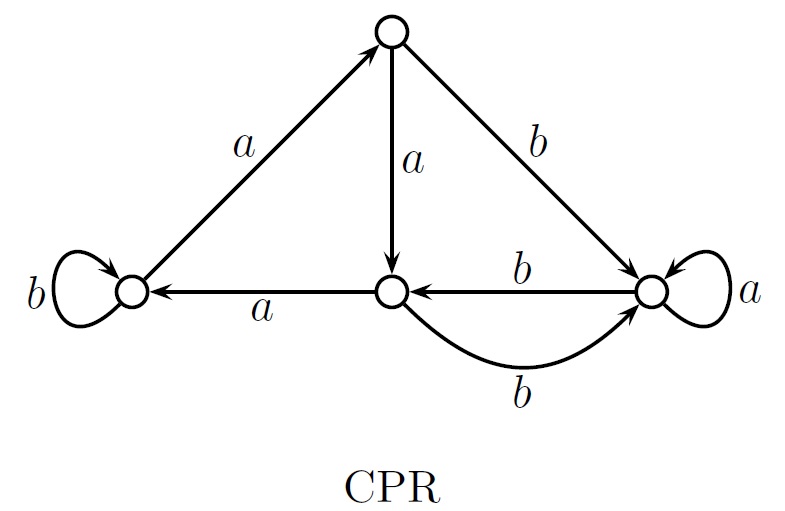}

\end{wrapfigure}
The first one is CPR, found by \v{C}ern\'y, Piricka and
Rosenauerova, \citep{CPR71}, and has unique synchronizing word of length 9, being $baababaab$.
The next two we call T4-1 and T4-2, as they were found by Trahtman.
The DFA T4-1 has a unique synchronizing word of length 9, being $abcacabca$; for T4-2 there are 4
synchronizing words of length 9 represented by $acb(a+c)a(a+b)cba$.

\begin{center}
\includegraphics[scale=0.25]{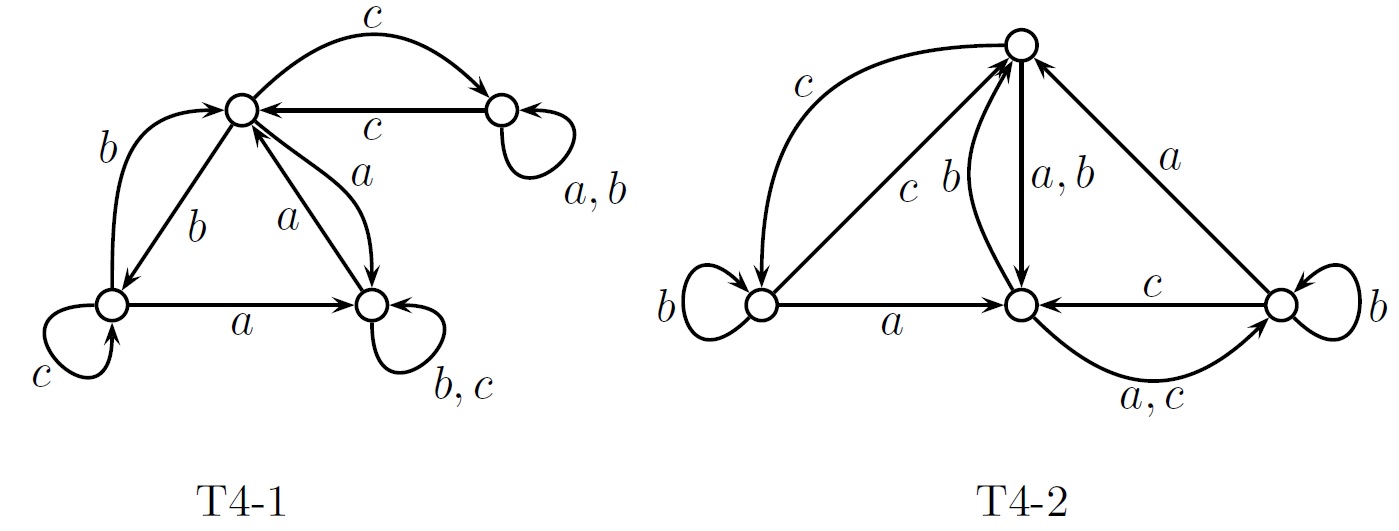}
\end{center}

In order to investigate all critical DFAs with four states, we introduce the DFA
A4 on five symbols $a,b,c,d,e$, depicted as follows.

\begin{center}
\includegraphics[scale=0.25]{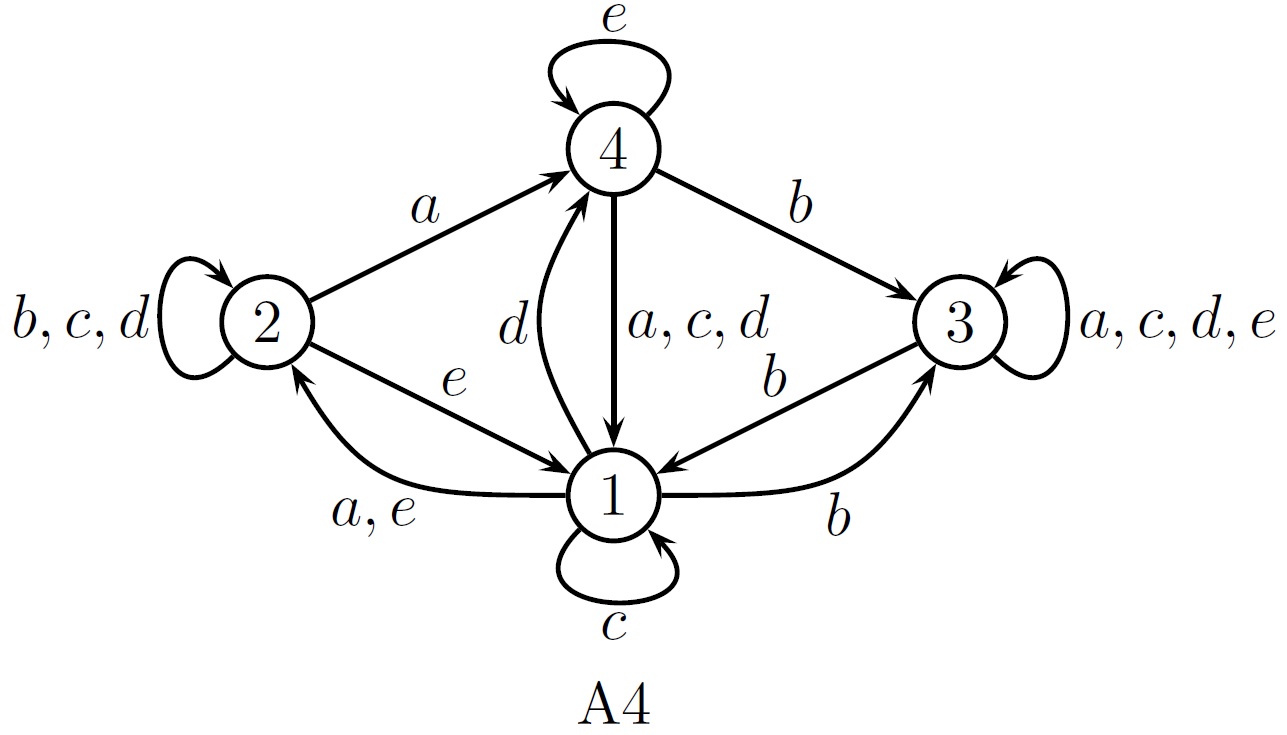}
\end{center}

Observe that A4 restricted to $a,b$ coincides with CPR and A4 restricted to $b,d,e$ coincides
with T4-1, so together with $C_4$ and T4-2 exactly the four automata with four states from
\citep{T06}, being the minimal ones.  On the other hand, $C_4$, T4-2 and A4 are the only maximal  basic critical DFAs on four states. We will prove this in Theorem \ref{thm4st}.
The DFA A4 admits 256 synchronizing words of length 9, expressed by the regular expression
$(b+c)(a+d)(a+e)(b+c)(a+e)b(a+d)(a+e)(b+c)$, where the synchronizing
state is 1 or 3, depending on the last symbol.
This follows from the analysis of the power set automaton of A4 that looks as
follows:

\begin{center}
\noindent\includegraphics[scale=0.28]{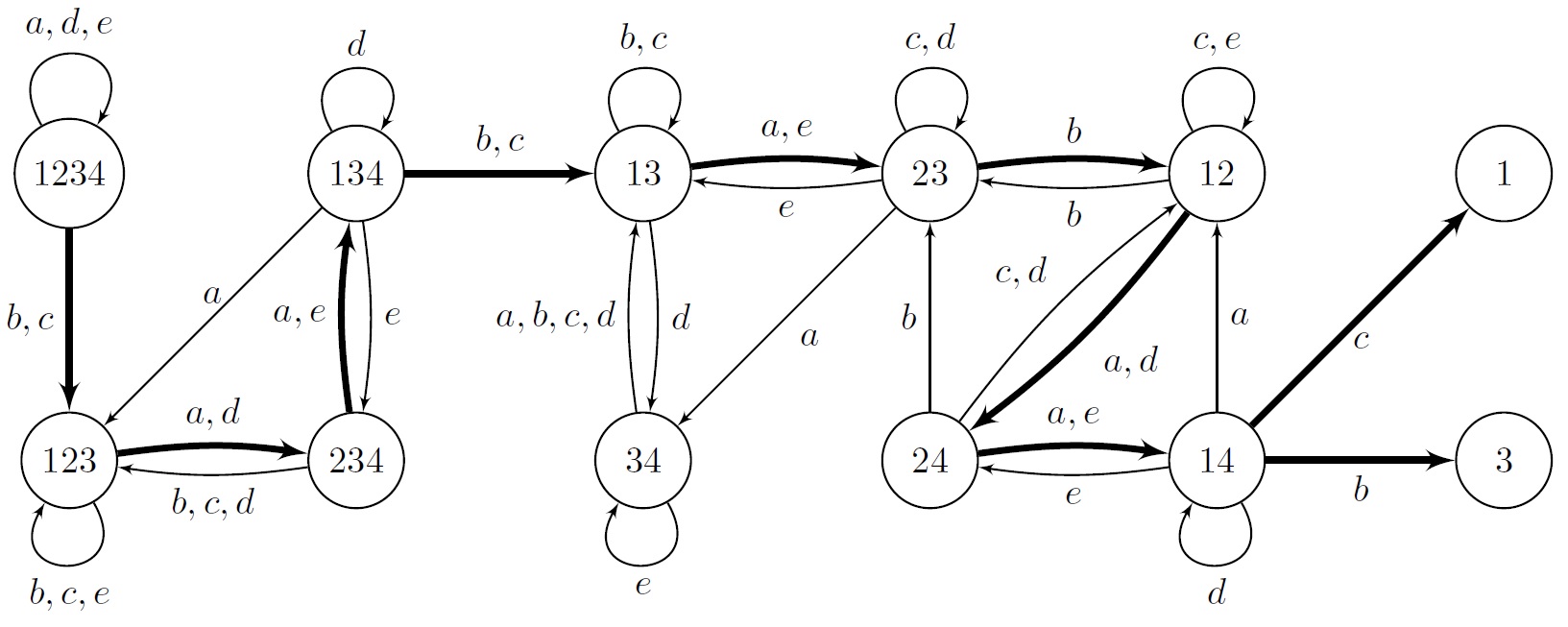}
\end{center}

Here the shortest paths from 1234 to a singleton are indicated by fat arrows.

The relationship between A4 and critical DFAs is given in the following theorem; it follows from the result of our algorithm.

\begin{theorem}
\label{thm4st}
No super-critical DFAs on four states exist, and a basic DFA on four states is critical if and
only if up to isomorphism it is $C_4$, T4-2, or one of the 10 automata that can be obtained
from A4 by removing zero or more symbols and keeping at least one of the sets
$\{a,b\}, \{b,d,e\}$ of symbols.
\end{theorem}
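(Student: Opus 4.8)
The plan is to combine the exhaustive, symmetry-reduced search of Section~\ref{sec34} for the hard classification part with a short by-hand verification of the structure of A4. First, I would run the depth-first enumeration described above with target synchronization length $s=9=(4-1)^2$: discard every DFA that already synchronizes in fewer than $9$ steps together with all its extensions, discard every non-synchronizing DFA whose extension bound $L$ satisfies $L<9$ together with all its extensions, and work throughout modulo the $4!$ permutations of $Q$. Since this search terminates without ever producing a DFA whose shortest synchronizing word exceeds $9$, no super-critical DFA on four states exists; and it outputs exactly $12$ critical basic DFAs up to isomorphism, which is the count claimed.

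Second, I would identify those $12$ with the stated list. For the ``if'' direction: A4 restricted to $\{a,b\}$ is CPR and A4 restricted to $\{b,d,e\}$ is T4-1, both critical, so by Lemma~\ref{property:extension} every sub-automaton of A4 that keeps one of these two sets is synchronizing with shortest synchronizing word of length at most $9$. To see the length is exactly $9$, I would read off the power set automaton of A4 pictured above and check, just as in the proof of Theorem~\ref{thm3st}, that on $2^Q$ every symbol of A4 induces only transitions that stay on (or return toward the start of) the displayed length-$9$ path from $1234$ to a singleton, so that deleting symbols can never open a shortcut. A plain inclusion--exclusion count shows there are $8+4-2=10$ subsets of $\{a,b,c,d,e\}$ meeting $\{a,b\}$ or $\{b,d,e\}$; since all five symbols of A4 act differently and none is the identity, each such subset gives a basic critical DFA, and one checks these $10$ are pairwise non-isomorphic (e.g.\ by the multiset of image sizes $\{|Qx| : x\in\Sigma\}$ together with the number of length-$9$ synchronizing words) and none is isomorphic to $C_4$ or to T4-2, since neither of the latter embeds in A4. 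As a by-product one gets that A4 (the full five-symbol automaton) is itself critical and that, together with $C_4$ and T4-2, it is maximal.

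The main obstacle is the ``only if'' direction: showing that \emph{every} critical basic DFA on four states occurs in this list. Here the short case analysis used for $n=3$ no longer works, because the power set automaton has $16$ vertices, a critical DFA may carry as many as $4^4-1=255$ symbols, and the combinatorics of ``which way does each symbol move each pair of states'' branches too widely to be controlled on paper. This is precisely what the pruned search settles: every critical DFA is an extension of a minimal critical DFA, the minimal critical DFAs are the ones the depth-first order reaches first, and all of their critical extensions are then enumerated and confirmed to lie among the $12$. Thus the theorem reduces to the correctness of the enumeration, which is certified by the source code of~\citep{BDZ17a}.
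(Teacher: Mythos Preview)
Your proposal is correct and follows essentially the same approach as the paper: the paper's entire proof of Theorem~\ref{thm4st} is the single sentence ``it follows from the result of our algorithm,'' i.e.\ the exhaustive symmetry-reduced search of Section~\ref{sec34}. You have simply spelled out more of the ``if'' direction by hand (the inclusion--exclusion count of the $10$ sub-automata, the sandwich argument using Lemma~\ref{property:extension} both ways to pin the length at exactly~$9$, and the non-isomorphism checks), whereas the paper leaves all of this implicit in the computer output; for the ``only if'' direction you correctly identify that, unlike the $n=3$ case, a by-hand argument is not feasible and the algorithm is what actually settles it.
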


As a consequence of Theorem \ref{thm4st} apart from the four minimal critical DFAs that were
known on four states, we obtain 8 more that are not minimal.

\subsection{Five and Six States}

Here is the table for five states for minimal synchronization length $\geq 11$:

\begin{center}
\renewcommand{\arraystretch}{1.4}
\begin{tabular}{|r|rrrrrr|}
\hline
alph.\@ & sync.\@ & sync.\@ & sync.\@ & sync.\@ & sync.\@ & sync.\@ \\[-5pt]
size & 16 & 15 & 14 & 13 & 12 & 11 \\
\hline
1 & & & & & & \\[-5pt]
2 & 1 & 4 & 11 & 23 & 43 & 46 \\[-5pt]
3 & 1 & 19 & 85 & 280 & 1218 & 2580 \\[-5pt]
4 & & 36 & 275 & 1237 & 11310 & 36644 \\[-5pt]
5 & & 25 & 613 & 2837 & 57013 & 290466 \\[-5pt]
6 & & 5 & 915 & 4275 & 194115 & 1512125 \\[-5pt]
7 & & & 978 & 4799 & 497505 & 5658383 \\[-5pt]
8 & & & 774 & 4342 & 1011273 & 16136371 \\[-5pt]
9 & & & 454 & 3234 & 1672827 & 36527661 \\[-5pt]
10 & & & 194 & 1944 & 2284062 & 67619593 \\[-5pt]
11 & & & 58 & 912 & 2596207 & 104657920 \\[-5pt]
12 & & & 11 & 322 & 2468648 & 137653835 \\[-5pt]
13 & & & 1 & 81 & 1967657 & 155665867 \\[-5pt]
14 & & & & 13 & 1314222 & 152597099 \\[-5pt]
15 & & & & 1 & 733735 & 130410659 \\[-5pt]
16 & & & & & 340803 & 97538645 \\[-5pt]
17 & & & & & 130715 & 64001561 \\[-5pt]
18 & & & & & 40943 & 36877921 \\[-5pt]
19 & & & & & 10303 & 18643103 \\[-5pt]
20 & & & & & 2033 & 8241950 \\[-5pt]
21 & & & & & 303 & 3166721 \\[-5pt]
22 & & & & & 32 & 1047312 \\[-5pt]
23 & & & & & 2 & 294118 \\[-5pt]
24 & & & & & & 68851 \\[-5pt]
25 & & & & & & 13103 \\[-5pt]
26 & & & & & & 1957 \\[-5pt]
27 & & & & & & 219 \\[-5pt]
28 & & & & & & 17 \\[-5pt]
29 & & & & & & 1 \\
\hline
total & 2 & 89 & 4369 & 24300 & 15334969 & 1038664728 \\
\hline
\end{tabular}
\end{center}

\newpage
The table for six states  for minimal synchronization length $\geq 20$ is as follows.

\begin{center}
\renewcommand{\arraystretch}{1.4}
\begin{tabular}{|r|rrrrrr|}
\hline
alph.\@ & sync.\@ & sync.\@ & sync.\@ & sync.\@ & sync.\@ & sync.\@ \\[-5pt]
size & 25 & 24 & 23 & 22 & 21 & 20 \\
\hline
1 & & & & & & \\[-5pt]
2 & 2 & & 2 & 11 & 22 & 45 \\[-5pt]
3 & & & 4 & 63 & 282 & 718 \\[-5pt]
4 & & & & 158 & 1655 & 6596 \\[-5pt]
5 & & & & 267 & 5396 & 34248 \\[-5pt]
6 & & & & 324 & 11010 & 112063 \\[-5pt]
7 & & & & 271 & 15075 & 255313 \\[-5pt]
8 & & & & 152 & 14417 & 437701 \\[-5pt]
9 & & & & 54 & 9894 & 598072 \\[-5pt]
10 & & & & 11 & 4982 & 675859 \\[-5pt]
11 & & & & 1 & 1875 & 642233 \\[-5pt]
12 & & & & & 531 & 513958 \\[-5pt]
13 & & & & & 110 & 344369 \\[-5pt]
14 & & & & & 15 & 191471 \\[-5pt]
15 & & & & & 1 & 87307 \\[-5pt]
16 & & & & & & 32118 \\[-5pt]
17 & & & & & & 9310 \\[-5pt]
18 & & & & & & 2054 \\[-5pt]
19 & & & & & & 326 \\[-5pt]
20 & & & & & & 34 \\[-5pt]
21 & & & & & & 2 \\
\hline
total & 2 & 0 & 6 & 1312 & 65265 & 3943797 \\
\hline
\end{tabular}
\end{center}


So both for 5 and 6 states, up to symmetry there are exactly two basic critical DFAs.
Apart from $C_5$ and $C_6$ these are one on five states from
Roman \citep{R08} and one on six states from Kari \citep{K01}, depicted as follows.

\begin{center}
\includegraphics[scale=0.28]{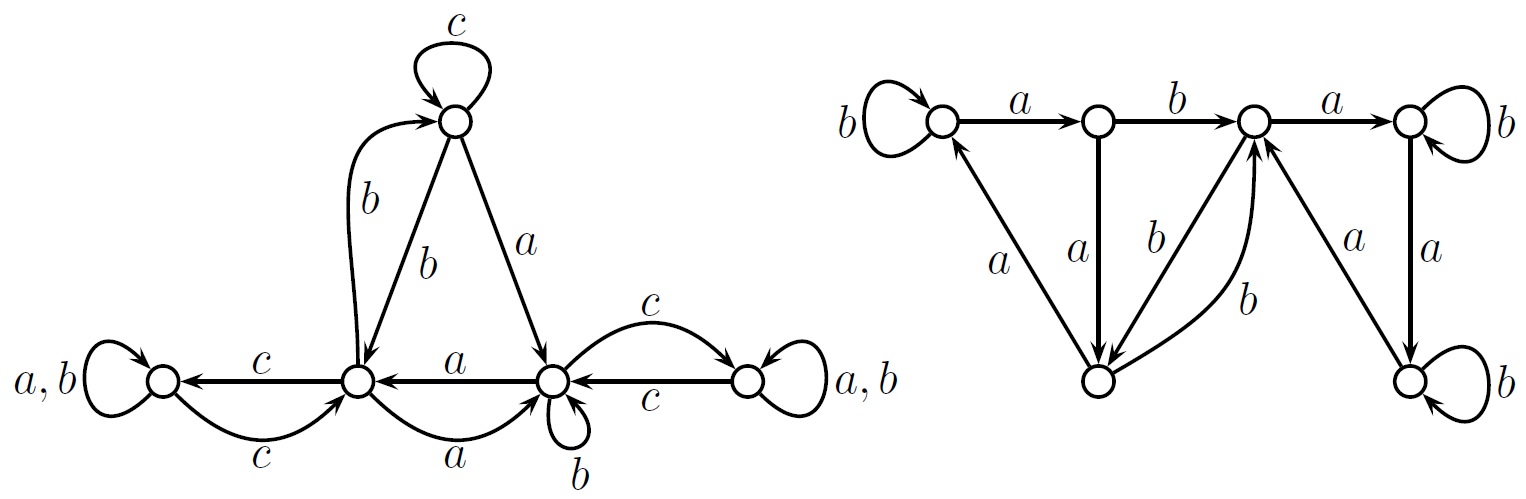}
\end{center}

For Roman's DFA the shortest synchronizing
word $abcacacbcaacabca$ is unique; for Kari's DFA there are two shortest synchronizing words,
described by \linebreak
$baabababaabbaba(baab+abaa)babaab$.

In contrast to 3 and 4 states, for 5 and 6 states there are no more basic critical DFAs than the minimal ones
that appeared in Trahtman's investigation.

In the table for 6 states we observe the first gap: DFAs with minimal synchronization length 25 exist (the two
critical DFAs), but no DFA with minimal synchronization length 24 exists, while for lengths $23, 22, 21,\ldots$
corresponding DFAs exist.

From these tables, we can extract the precise values of the maximal shortest synchronizing word length
$d(n,k)$ for DFA sizes $n = 2,3,4,5,6$, and alphabet sizes $k \le 17$ at least and $k \le 29$ at most,
depending on $n$. For practical reasons, the tables do not contain all our computational results.
For $n=4,5,6$, and larger alphabet sizes up to $41$, we have in addition that $d(n,k) = 5$, $d(n,k) = 10$, and
$d(n,k) = 19$ respectively. This leads to the following graph.

\begin{center}
\begin{tikzpicture}[x=2.25mm,y=2.25mm]
\begin{scope}
\clip (0,0) rectangle (40.7,25.7);
\foreach \x in {0,5,...,40} {
  \foreach \y in {0,10,20} {
    \fill[black!11] (\x,{\y+5*mod(\x,2)}) rectangle ++(5,5);
  }
}
\end{scope}
\foreach \x in {5,10,...,40} {
  \draw (\x,0) node[black,anchor=north] {$\scriptstyle\x$};
}
\foreach \y in {5,10,...,25} {
  \draw (0,\y) node[black,anchor=east] {$\scriptstyle\y\!$};
}
\draw (0,25.7) -- (0,-0.7) (40.7,0) -- (-0.7,0);
\mathversion{bold}
\draw[black!44,ultra thick,line cap=round,line join=round]
(1,1) -- (2,1) -- (3,1)
(1,2) -- (2,4) -- (5,4) -- (6,3) -- (9,3)
      -- (10,2) -- (23,2) -- (24,1) -- (26,1)
(1,3) -- (2,9) -- (5,9) -- (6,8) -- (8,8)
      -- (9,7) -- (17,7) -- (18,5)
      edge[line cap=butt] (40.7,5)
(1,4) -- (2,16) -- (3,16) -- (4,15) -- (6,15)
      -- (7,14) -- (13,14) -- (14,13) -- (15,13)
      -- (16,12) -- (23,12) -- (24,11) -- (29,11)
      -- (30,10) edge[line cap=butt] (40.7,10)
(1,5) -- (2,25) -- (3,23) -- (4,22) -- (11,22)
      -- (12,21) -- (15,21) -- (16,20) -- (21,20)
      -- (22,19) edge[line cap=butt] (40.7,19)
(3,1) node[anchor=west,scale=1.3] {$n=2$}
(26,1) node[anchor=west,scale=1.3] {$n=3$}
(40.7,5) node[anchor=west,scale=1.3] {$n=4$}
(40.7,10) node[anchor=west,scale=1.3] {$n=5$}
(40.7,19) node[anchor=west,scale=1.3] {$n=6$};
\mathversion{normal}
\draw (17.5,-2) node {$k$} edge[->] (23,-2)
(-3,12.5) node {$d(n,k)$} edge[->] (-3,18);
\end{tikzpicture}
\end{center}

From this graph, we see that the lower bounds from Theorem \ref{theorem:alphabet345}
for $k=3,4,5$ and $n = 3,4,5,6$ are only sharp for $k=n=3$.
For instance, $d(6,4) = 22$, where Theorem \ref{theorem:alphabet345} yields  $d(6,4) \geq 21$. This gives rise to the question whether the bounds for general $n$ can be improved.

Let's compare to the situation without alphabet size restrictions. For small values of $n$, quite some critical DFAs are known. However, as $n$ increases, such exceptional cases seem to evaporate rapidly and for general $n$ only \v{C}ern\'y's sequence is known. A similar phenomenon might be the case for fixed alphabet size as well: a pattern for general $n$ and some exceptional cases for small $n$.

There is at least some indication that the lower bound $d(n,3)\geq n^2-3n+4$ is an optimal general lower bound. It has been verified by Trahtman that there are no DFAs with three or four symbols for $n=7$ and synchronizing length exceeding $n^2-3n+4$. Also for $n> 7$, no DFAs are known with synchronizing length less than $(n-1)^2$ and strictly larger than $n^2-3n+4$. Furthermore, Trahtman's analysis confirms that no examples in this range exist for $n=8,9,10$ and alphabet size $2$.

\section{Extending $C_n$}
\label{seccn}

We observed that for $n=3,4$ there were non-maximal critical DFAs: DFAs that admit extensions that remain critical. For $n=5,6$ this did not occur. So it is a natural question
how this behaves for $n \geq 7$. In this section we show that then the DFA $C_n$, which is the only known critical DFA, is maximal: it cannot be extended
to a basic critical DFA.
The main result of this section is the following:

\begin{theorem}\label{theorem:extension}
Let $n\geq 5$ and let $C_{n}^c$ be a basic extension of $C_n$ by a symbol $c$. Then
$C_{n}^c$ admits a synchronizing word of length strictly
less than $(n-1)^2$.
\end{theorem}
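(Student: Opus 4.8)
The plan is to run a case analysis on how the new symbol $c$ acts on the state set $\{1,2,\dots,n\}$ of $C_n$, and in each case exhibit a synchronizing word for $C_n^c$ that is shorter than $(n-1)^2$. Recall that in $C_n$ the symbol $a$ cyclically shifts all states ($i\mapsto i+1$, $n\mapsto 1$) and $b$ fixes every state except $1\mapsto 2$; the unique shortest synchronizing word $b(a^{n-1}b)^{n-2}$ reaches state $2$. The key structural fact I would lean on is that in the power-set automaton of $C_n$, the ``expensive'' part of synchronizing is repeatedly collapsing a set of size $k$ to size $k-1$: each such collapse needs a block of the form $a^{j}b$ with $j$ roughly $n-1$, because $b$ only merges two states when both $1$ and $2$ are present. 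Any symbol $c$ that is not the identity and not equal to $a$ or $b$ gives us an alternative, ``cheaper'' way to perform at least one of these collapses, or to reach a small reachable set faster.

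Concretely, I would organize the cases by the function $q\mapsto qc$. First, the easy sub-case: if $c$ is not injective, then $Qc$ already has size $\le n-1$, and moreover I can choose a shift $a^i$ so that $Q a^i c$ is a particular $(n-1)$-set; from there the remaining synchronization inside an $(n-1)$-element ``sub-behaviour'' costs at most about $(n-2)^2$-ish, and one checks $1 + (\text{that}) < (n-1)^2$ for $n\ge 5$ — this is essentially the spirit of Lemma~\ref{property:extension} combined with the slack between $(n-1)^2$ and $(n-2)^2$. Second, the harder sub-case: $c$ is a permutation of $Q$ other than a power of $a$ (and not the identity). Here I would use $c$, possibly conjugated by powers of $a$, to merge two states that the word $b$ alone could not merge as cheaply — the point being that $c$ being a non-cyclic permutation means it moves some pair of states to a pair containing both $1$ and $2$ (or directly identifies with $b$-like behaviour) in fewer than $n-1$ applications of $a$. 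Replacing even one of the $n-2$ blocks $a^{n-1}b$ in the \v{C}ern\'y word by a shorter block $a^{j}c\cdots$ with $j<n-1$ already saves at least one letter, giving length $<(n-1)^2$. I would isolate the truly special permutations (transpositions, in particular the transposition $(1\ 2)$, and products of disjoint transpositions) and handle them individually, since those are the ones ``closest'' to $b$ and need the most care.

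The main obstacle I anticipate is the boundary case where $c$ is a transposition, e.g. $c=(1\,2)$ or $c$ swapping two adjacent states, because such a $c$ behaves very much like $b$ (it also merges nothing by itself and only helps when combined with shifts) and the naive substitution might not obviously shorten anything. For these I would argue more carefully about the power-set automaton: a transposition $c$ together with $a$ generates enough of the symmetric group (or at least enough motions) that some set of size $k$ becomes reducible by a strictly shorter word than in $C_n$, or that $Q$ reaches a $2$-element set faster. I expect the proof to reduce, after all reductions, to verifying a handful of explicit inequalities of the form ``new word length $= (n-1)^2 - (\text{positive saving})$'', uniformly in $n\ge 5$, with the small cases $n=5,6$ (where the saving is smallest) possibly needing the computational confirmation already recorded in Section~\ref{sec34}. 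The write-up would therefore be a finite but somewhat lengthy enumeration: list the conjugacy-class-type possibilities for $c$, and for each give the shortened synchronizing word explicitly.
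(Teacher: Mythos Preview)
Your high-level split into ``$c$ non-injective'' versus ``$c$ a permutation'' matches the paper, but both branches contain genuine gaps.

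For the non-injective branch you write that after one application of $c$ (or $a^ic$) you land in an $(n{-}1)$-set and that ``the remaining synchronization \dots costs at most about $(n-2)^2$-ish''.  This is false in general: from an $(n{-}1)$-set inside $C_n$ the remaining cost along the \v{C}ern\'y path is $n(n-2)$, not $(n-2)^2$.  In particular, if $Qc = Q\setminus\{1\}$ then $Qc = Qb$ and you have saved literally nothing with your first step.  The paper isolates exactly this case (Lemma \ref{lemma:case2a}): if $|Qc|=n-1$ and the shortest synchronizing word is still $(n-1)^2$, then necessarily $Qc=Q\setminus\{1\}$ \emph{and} $c$ permutes $Q\setminus\{1\}$.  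Handling these $c$ then requires essentially the full permutation analysis again (Lemmas \ref{lemma6}--\ref{lemma9}), coupled to a genuine permutation $\tilde c$.  Your proposal has no mechanism for this; the appeal to Lemma~\ref{property:extension} and the slack $(n-1)^2-(n-2)^2$ does not apply, because you are not restricting to an $(n{-}1)$-state sub-DFA.

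For the permutation branch, the paper's crucial first reduction is missing from your sketch: one shows that unless $qc\le q+1$ for every $q$, a single $c$ jumps forward on the path \eqref{eq:path} and immediately shortens.  This forces $c$ to decompose into consecutive cyclic ``loops'' (Lemma \ref{lemma:c_structure}), and the remainder of the proof is a careful construction of an explicit word $w$ (three sub-cases according to the loop lengths) with $|w|<2n$ replacing two blocks $(a^{n-1}b)^2$ of the \v{C}ern\'y word.  Your plan to organise by conjugacy class and to treat transpositions as the extremal case is not the right decomposition here: the hard symbols are not transpositions but the ``staircase'' permutations with all $qc\le q+1$, and among those the ones with only short loops.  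Also, your remark that a permutation $c$ ``moves some pair to a pair containing both $1$ and $2$'' is not what is exploited---$c$ is a permutation, so it merges nothing; the saving comes from shortcutting along the path of $(n{-}1)$-sets, and later from the explicit constructions in Lemma \ref{lemma:permutation}.  Finally, no computer check for $n=5,6$ is needed or used in the paper's proof of this theorem.
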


Recall that {\em basic} means that $c$ is not equal to $a$ or $b$ and
that $c$ is not the identity function on $Q$. This section is
organized as follows: first we collect some properties of
$C_n$ and its unique shortest synchronizing word. Then
we consider the cases $|Qc|=n$, $|Qc| = n-1$ and $|Qc|\leq n-2$
separately.

\subsection{Properties of $C_n$}

Recall that $C_n$ is defined by
$n$ states $1,2,\ldots,n$, and two symbols $a,b$, acting by
$qa = q+1$ for $q = 1,\ldots,n-1$, $na = 1$, and  $qb = q$ for
$q = 2,\ldots,n$, $1b = 2$.
It is well known that $w_n = b(a^{n-1}b)^{n-2}$ of length  $|w_n| = (n-1)^2$ is its
shortest synchronizing word.  It is synchronizing since
\begin{eqnarray}
Qb &=& \left\{2,3,\ldots,n\right\}\\
\left\{2,3,\ldots,k\right\}a^{n-1}b &=&
\left\{2,3,\ldots,k-1\right\},\quad 3 \leq k\leq n.\label{eq:powerpath}
\end{eqnarray}
The first part of this word defines the path
\begin{equation}\label{eq:path} Q
\xrightarrow{b} Q\setminus\left\{1\right\}\xrightarrow{a}
Q\setminus\left\{2\right\}\xrightarrow{a}\ldots\xrightarrow{a}
Q\setminus\left\{n\right\}.
\end{equation}
We now extend the alphabet of the automaton by a non-trivial new
symbol $c$. Non-trivial means that the transitions defined by $c$
are not all equal to the transitions of $a$ or the transitions of
$b$ and furthermore that $c$ is not the identity function. We will
distinguish three cases:
\begin{enumerate}
\item $|Qc| = n$, i.e. $c$ is a permutation. \item $|Qc| = n-1$,
i.e. $c$ has deficiency 1. \item $|Qc| \leq n-2$, i.e. $c$ has
deficiency 2.
\end{enumerate}
We will show that in all these cases a shorter synchronizing word
exists. The general pattern in the arguments is as follows. The shortest synchronizing word $w_n$ corresponds to a path from $Q$ to a singleton in the power automaton of $C_n$. Take two sets $S,S'\subseteq Q$ on this path which are visited in this order. Let $d$ be the distance from $S$ to $S'$, i.e.
$$
d := \min\left\{|w|:Sw = S', w\in\left\{a,b\right\}^\star\right\}.
$$
Now construct a word $w\in\left\{a,b,c\right\}^\star$ in the automaton $C_n^c$ for which $Sw = S'$ and $|w|<d$. Then $C_n^c$ admits a synchronizing word of length at most $|w_n|-d+|w|<(n-1)^2$.

\subsection{Construction of a Shorter Synchronizing Word}

If $c$ defines a permutation on $Q$, we may assume that $c$
satisfies:
\begin{equation}\label{eq:permutationrequirement}
qc\leq q+1 \textrm{\ for\ all\ } q\in Q.
\end{equation}
Indeed, if $qc = q+k$ for some $q\in Q$ and $k\geq 2$, then
$(Q\setminus\left\{q\right\})c = Q\setminus\left\{q+k\right\}$,
which in view of (\ref{eq:path}) would imply existence of a
synchronizing word shorter than $(n-1)^2$. The following lemma
describes the structure of $c$.

\begin{lemma}\label{lemma:c_structure} If $|Q|=n\geq 1$ and $c$ is a permutation on $Q$ satisfying
(\ref{eq:permutationrequirement}), then there exist numbers $L$
(number of $c$-loops) and $1\leq l_1,\ldots,l_L \leq n$ (lengths
of $c$-loops) with $\sum_{i=1}^{L}l_i=n$ such that
\begin{equation}
qc = \left\{\begin{array}{lll}q-l_i+1&\textrm{if}\
q=l_1+\ldots + l_i\ \textrm{for\ some\ }1\leq i\leq L \\
q+1&\textrm{otherwise}\end{array}\right.
\end{equation}
\end{lemma}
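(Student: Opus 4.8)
\textbf{Proof plan for Lemma~\ref{lemma:c_structure}.}
The plan is to view $c$ as a permutation of $Q=\{1,2,\ldots,n\}$ and to show that the hypothesis (\ref{eq:permutationrequirement}), namely $qc\le q+1$ for all $q$, forces $c$ to be a disjoint union of ``forward-shift cycles'' on consecutive blocks. First I would record two elementary consequences of (\ref{eq:permutationrequirement}). Since $c$ is a bijection, applying $qc\le q+1$ to the value $n$ gives $nc\le n+1$, but $nc\le n$ anyway, so no information is lost; the real force of the inequality is that $c$ can never move a state ``far forward.'' The key observation is that the orbits (cycles) of $c$ are \emph{intervals}: if a cycle of $c$ contains a state $q$, I claim it contains all of $q,q+1,\ldots$ up to the largest element of that cycle, with no gaps.

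The main step is to prove this interval property. Consider a single cycle $C$ of $c$ and let $M=\max C$. Order the elements of $C$ along the cycle starting from $M$: $M=q_0\to q_1\to\cdots\to q_{\ell-1}\to q_0$, where $q_{j+1}=q_jc$. Because $q_jc\le q_j+1$ and $q_j\le M$ for all $j$, we get $q_{j+1}\le q_j+1$; since $q_0=M$ is the maximum, the step from $q_{\ell-1}$ back to $M$ has $q_{\ell-1}c=M$, so that is the \emph{only} place where the value can jump up to the top, and in fact $q_{\ell-1}=M-1$ is forced once we know $C$ is an interval. More directly: starting from $M$ and repeatedly applying $c$, each application decreases the index by at least $\ldots$ wait — it need not decrease, but it increases by at most $1$; since we must return to $M$ after $\ell$ steps and we never exceed $M$, a counting/telescoping argument on $\sum_j (q_{j+1}-q_j)=0$ together with each summand being $\le 1$ shows that the cycle, read from $M$, must go $M\to M-1\to M-2\to\cdots\to M-\ell+1\to M$. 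Hence $C=\{M-\ell+1,\ldots,M\}$ is an interval of length $\ell$, and on it $c$ sends the top element $M$ to $M-\ell+1$ and every other element $q$ to $q+1$. This is exactly the stated formula once one checks that the intervals, being the cycles of a permutation of $\{1,\ldots,n\}$, tile $\{1,\ldots,n\}$ consecutively: the smallest interval must start at $1$, the next starts where the previous ended plus one, etc. Setting $l_1,\ldots,l_L$ to be the lengths of these intervals in increasing order of their positions gives $\sum_i l_i=n$ and the displayed description of $qc$, with the ``jump'' occurring precisely at the partial sums $q=l_1+\cdots+l_i$.

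To make the counting argument fully rigorous I would phrase it as follows. Fix a cycle $C$, let $M=\max C$, and suppose for contradiction that $C$ is not the interval $\{M-|C|+1,\ldots,M\}$; then some $m$ with $M-|C|+1\le m\le M-1$ lies outside $C$. Travelling around $C$ from $M$ via $c$, the sequence of visited states starts at $M$, stays $\le M$, increases by at most $1$ at each step, and after $|C|$ steps returns to $M$; such a sequence of integers in $\{1,\ldots,M\}$ with all increments $\le 1$ and total increment $0$ over $|C|$ steps must take every value in $\{M-|C|+1,\ldots,M\}$ — otherwise it has fewer than $|C|$ distinct values yet the steps are all $0$ or $1$ or negative, forcing a repeat before returning, contradicting that $C$ is a cycle of length $|C|$. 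This contradiction establishes the interval property, and the rest is bookkeeping.

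The part I expect to be the only mildly delicate point is this last counting claim — pinning down exactly why ``increments $\le 1$, total $0$, length $|C|$, all values distinct until the return'' forces the cycle to be a contiguous decreasing-by-one run (modulo the single wrap). Everything else — that cycles partition $\{1,\ldots,n\}$, that intervals starting from $1$ tile consecutively, and that the formula in terms of the partial sums $l_1+\cdots+l_i$ matches — is routine. I would therefore spend the bulk of the written proof on the interval property and treat the assembly into the stated formula as immediate.
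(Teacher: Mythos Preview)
Your overall strategy---show each cycle of $c$ is a contiguous interval, then read off the block structure---is valid and different from the paper's. The paper instead does a brief induction on $n$: either $1c=1$ (apply the hypothesis to $\{2,\ldots,n\}$), or $1c=2$, and then injectivity together with $qc\le q+1$ forces the orbit of $1$ to be $1\to2\to\cdots\to l_1\to 1$ for some $l_1$, after which one recurses on $\{l_1+1,\ldots,n\}$. That argument is three lines; yours gives more structural insight but needs more care.

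Two concrete errors in your sketch. First, the traversal direction is backwards: from $M=\max C$ the cycle runs $M\to M-\ell+1\to M-\ell+2\to\cdots\to M-1\to M$, not $M\to M-1\to\cdots$. (Test: $n=3$, $c=a$; from $3$ you go to $1$.) This does not affect your conclusion about which set $C$ is, but the sentence as written is false. Second, and more important, the justification ``otherwise it has fewer than $|C|$ distinct values'' is a non sequitur: a missed value in $\{M-\ell+1,\ldots,M\}$ could in principle be compensated by a value below $M-\ell$, so distinctness alone does not force the interval. The clean fix is a backward induction from $q_\ell=M$: since $q_\ell\le q_{\ell-1}+1$ and $q_{\ell-1}\le M$, $q_{\ell-1}\ne M$, you get $q_{\ell-1}=M-1$; then $q_{\ell-2}\ge M-2$ and $q_{\ell-2}\notin\{M-1,M\}$ gives $q_{\ell-2}=M-2$; and so on down to $q_1=M-\ell+1$. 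Equivalently, use the intermediate-value observation you almost stated: once the sequence dips below some $m<M$ it cannot return to $M$ without hitting $m$, since upward steps are at most $+1$; hence every integer between $\min C$ and $M$ lies in $C$. Either repair is short, so your plan goes through once patched.
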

An illustration of the statement is given below.

\includegraphics[scale=0.25]{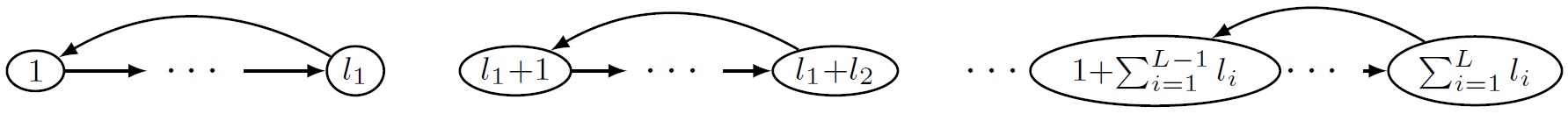}

\begin{proof} We give a proof by induction. For $n=1$,
$1\xrightarrow{c}1$, so $L=1$ and $l_1=1$.
Now suppose the statement is true for all $n\leq N$ and consider
the case $|Q|=N+1$. If $1\xrightarrow{c}1$, then $c$ defines a
permutation on $Q\setminus\left\{1\right\}$. Applying the
induction hypothesis on $Q\setminus\left\{1\right\}$ gives the
result. If
$1\xrightarrow{c}2\xrightarrow{c}\ldots\xrightarrow{c}k$ for some
$k\geq 2$, then either $kc = k+1$ or $kc=1$. In both cases there
is a number $l_1\geq 1$ such that $1\xrightarrow{c}\ldots
\xrightarrow{c}l_1\xrightarrow{c}1$. Apply the induction
hypothesis on the remaining $n-l_1$ states.\qed
\end{proof}

Note that $L=1$ and $L=n$ are the trivial cases, because then
$c=a$ or $c$ is the identity. Before we give a general argument,
we first give an example.

\includegraphics[scale=0.25]{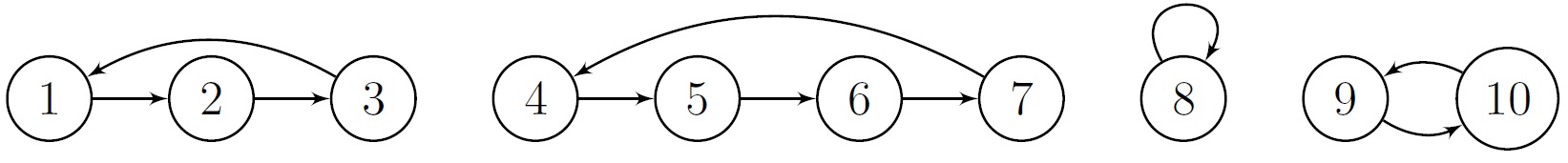}

\begin{example}
Consider the automaton $C_{10}^c = \left\{Q,\Sigma,\delta\right\}$
 with $Q = \left\{1,\ldots,10\right\}$ and $\Sigma =
\left\{a,b,c\right\}$. The actions of the symbols $a$ and $b$ are
from the definition of $C_n$ and $c$ is the permutation shown above.
Here we have four loops ($L=4$) with lengths $l_1 = 3, l_2 = 4, l_3 = 1$ and $l_4 = 2$.
We will show how to use the
$c$-loop of length four to create a shorter synchronizing word.
Consider the set $S = \left\{2,\ldots,9\right\}$.
We start by a greedy approach to reach a set of size 7:
$$
Sa^3b = \left(\left\{1,2\right\}\cup \left\{5,\ldots,10\right\}\right)b = \left\{2\right\}\cup \left\{5,\ldots,10\right\}.
$$
As a next step, we shift everything by using the symbol $a$ until
the isolated state $\left\{2\right\}$ ends up in the $c$-loop of
length four:
$$
\left(\left\{2\right\}\cup \left\{5,\ldots,10\right\}\right)a^3 = \left\{1,2,3\right\}\cup\left\{5\right\}\cup\left\{8,9,10\right\}
$$
Since $\left\{1,2,3\right\}$ and $\left\{8,9,10\right\}$ are
(unions of) full $c$-loops, they are invariant under $c$.
Therefore, we can move the isolated state $\left\{5\right\}$ to
the desired position:
$$
\left(\left\{1,2,3\right\}\cup\left\{5\right\}\cup\left\{8,9,10\right\}\right)c^3 = \left\{1,2,3,4\right\}\cup\left\{8,9,10\right\}
$$
Finally, we shift again by a power of $a$ and apply $b$ to get rid
of one more state:
$$
\left(\left\{1,2,3,4\right\}\cup\left\{8,9,10\right\}\right)a^3b = \left\{1,\ldots,7\right\}b = \left\{2,\ldots,7\right\}:=S'.
$$
We conclude that the word $w = a^3ba^3c^3a^3b$ has the property
that $Sw = S'$. In $C_{10}$ both $S$ and $S'$ are on the
shortest path from $Q$ to $\left\{2\right\}$ and by
(\ref{eq:powerpath}) the distance between them is equal to
$2n=20$. The word $w$ has length $|w| = 14$, so in
$C_{10}^c$ there exists a synchronizing word of length
at most $(10-1)^2-6 = 75$. Note that there might be even shorter
synchronizing words, but for our main goal it is sufficient to
have some synchronizing word shorter than $81$.
\end{example}

The idea of this example works in more generality if there is a
$c$-loop of length at least 3, as is proved in the next lemma. If
the longest loop has length $2$, then basically we can do the same
thing, but we need at least three $c$-loops to isolate a state.

\begin{lemma}\label{lemma:permutation}
Let $n\geq 5$ and let $C_{n}^c$ be an extension of the automaton
$C_n$ by a symbol $c$ as given in Lemma
\ref{lemma:c_structure}. If $2\leq L \leq n-1$, then
$C_{n}^c$ admits a synchronizing word of length strictly
less than $(n-1)^2$.
\end{lemma}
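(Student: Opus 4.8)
The plan is to exploit a single $c$-loop exactly as in the worked example, splitting into the case where some loop has length $\geq 3$ and the case where every loop has length $\leq 2$ (forcing, since $2\leq L\leq n-1$ and $n\geq 5$, the existence of at least two loops and hence at least three when $n$ is small enough, or at any rate enough room to isolate a state using two or three $2$-loops). Fix a longest $c$-loop, say on states $\{1,\dots,l\}$ after a cyclic relabelling compatible with $a$; note that by Lemma~\ref{lemma:c_structure} any union of full consecutive $c$-loops is invariant under $c$. The general pattern spelled out after (\ref{eq:path}) tells us it suffices to pick two sets $S,S'$ both lying on the power-automaton path of $w_n$ and to produce a word $w\in\{a,b,c\}^\star$ with $Sw=S'$ and $|w|$ strictly smaller than the $\{a,b\}$-distance from $S$ to $S'$, which by (\ref{eq:powerpath}) is $2n$ for consecutive sets of the shape $\{2,\dots,k\}\to\{2,\dots,k-1\}$.

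First I would take $S=\{2,\dots,n-1\}$ (size $n-2$), a set on the path, and mimic the example: apply a greedy $a^{?}b$ to drop to size $n-3$ with one isolated state trailing a block; then shift by a power of $a$ so the isolated state lands just after the chosen $c$-loop while the loop itself is filled by a contiguous block of the current set; then apply $c$ repeatedly (at most $l-1$ times) to slide the isolated state through the loop and merge it with the block, using loop-invariance of the other full loops; finally shift back by a power of $a$ and apply $b$ to reach $S'=\{2,\dots,n-2\}$, the next set on the path. Carrying out the bookkeeping, the total length of $w$ is $n + (\text{loop traversal}) + O(1)$, which is of order $n+l$, comfortably below $2n$ as long as the loop is short relative to $n$; when the chosen loop is \emph{long} (close to $n$), one instead uses the complementary observation that a long loop leaves few states outside it, so the "other direction" argument — moving along the loop with $c$ rather than along the full cycle with $a$ — again saves steps. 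The arithmetic to check is just that $|w| < 2n$ in every sub-case, which is routine once the word is written down.

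For the all-$2$-loops case the same idea applies but a single $2$-loop only swaps two adjacent states, which is not enough to absorb an isolated singleton; here I would use two (or three) consecutive $2$-loops, i.e.\ a block of four (or six) states carrying the structure $\{p,p+1\}$-loop followed by $\{p+2,p+3\}$-loop, and route the isolated state through this block by alternating $c$ with short powers of $a$, again ending on the path one size smaller. Since $L\le n-1$ excludes the trivial $c=$ identity and $L\ge 2$ excludes $c=a$, and since $n\ge5$ guarantees enough states to house the gadget, this is always possible; I would isolate the genericity verification ("there is room for the gadget") as a small lemma or just a parenthetical count.

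The main obstacle I anticipate is not any single step but the \emph{uniformity} of the construction: one must verify that for \emph{every} loop-length profile $(l_1,\dots,l_L)$ with $2\le L\le n-1$ there is a valid choice of $S$, of the intermediate shifts, and of the loop to use, such that all intermediate sets stay reachable-from-$Q$-compatible (actually only $S$ and $S'$ need to lie on the path, which helps) and the length bound $|w|<2n$ holds — in particular the edge cases $L=2$ with one very long and one very short loop, and the all-$2$-loops case with $n$ odd (where one loop must have length $1$, i.e.\ a fixed point, which is itself a $c$-invariant singleton and can be used directly). I expect the cleanest route is to prove the length bound $|w|\le n+\ell^\star+c_0$ for a universal constant $c_0$, where $\ell^\star$ is the length of a shortest non-trivial loop structure one can exploit, and then observe $\ell^\star\le n-2$ always, so $|w|\le 2n-2+c_0<2n$ once one is slightly more careful, or simply pick $S,S'$ further apart (distance a multiple of $2n$) to create arbitrary slack.
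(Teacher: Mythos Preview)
Your plan is essentially the paper's approach: find two sets $S,S'$ on the $w_n$-path at $\{a,b\}$-distance $2n$ and build a word of length $<2n$ between them, using that unions of full $c$-loops are $c$-invariant. The paper carries this out with the same case split (some loop of length $\geq 3$ versus all loops of length $\leq 2$) and explicit words; for instance in the long-loop case it takes $S=\{2,\dots,n-l_k+3\}$, $S'=\{2,\dots,n-l_k+1\}$ and $w=a^{l_k-1}b\,a^{\Lambda^-}c^{l_k-1}a^{\Lambda^+}b$ of length $l_k+n<2n$, where $\Lambda^-,\Lambda^+$ are the total loop lengths before and after the chosen loop.

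There is one concrete slip you should not carry into a full proof: the phrase ``after a cyclic relabelling compatible with $a$'' is illegitimate. The symbol $a$ is rotation-invariant, but $b$ is not --- it singles out state $1$ --- so you cannot assume the chosen $c$-loop sits on $\{1,\dots,l\}$. This is exactly why the paper keeps track of $\Lambda^-$ and $\Lambda^+$ explicitly rather than normalising the position of the loop. Your choice $S=\{2,\dots,n-1\}$ and the subsequent shifts would have to be written in terms of the actual position of the loop, and the arithmetic then depends on $\Lambda^-,\Lambda^+$ just as in the paper.

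For the all-short-loops case your sketch (``route the isolated state through two or three consecutive $2$-loops by alternating $c$ with short powers of $a$'') is headed the right way but is vaguer than necessary. The paper's split here is cleaner and worth adopting: if some $2$-loop is \emph{not} the last one, a single application of $c$ (not several) already suffices, because the isolated state can be parked in that $2$-loop while its neighbour loop absorbs it after one $c$ and a shift --- the word is $a^{l_k+l_{k+1}-1}b\,a^{\Lambda^-}c\,a^{\Lambda^+}b$ of length $n+2$. If the only $2$-loop is the last one ($l_L=2$), a separate explicit word $a^2ba^{n-3}cab$ of length $n+3$ handles it. Both are comfortably below $2n$ for $n\geq 5$, so your fallback of spacing $S,S'$ further apart is not needed.
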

\begin{proof} We distinguish the following three cases:
\begin{itemize}
\item $L\geq 2$ and $l_k\geq 3$ for some $k$.

\item $L\geq 3$ and $l_k= 2$ for some $k\leq L-1$.

\item $L\geq 3$ and $l_L = 2$.
\end{itemize}
Note that for all $n\geq 5$ and all possible non-trivial choices
of $c$, the extended automaton $C_n^+$ satisfies at
least one of these cases.

\paragraph{Case 1: $L\geq 2$ and $l_k\geq 3$ for some $k$.}
Take $k$ such that $l_k\geq 3$ and write
$\Lambda^- = \sum_{i=1}^{k-1}l_i$, $\Lambda^+ =\sum_{i=k+1}^Ll_i$,
for the sum of the loop lengths before the $k$th loop and after the
$k$th loop respectively. These sums can be zero if $k=1$ or $k=L$.
Define $\Lambda = \Lambda^-+\Lambda^+=n-l_k\leq n-3$. Since $L\geq
2$, we have $\Lambda\geq 1$. Take
$$
S = \left\{2,3,\ldots,n-l_k+3\right\},\quad S' = \left\{2,3,\ldots,n-l_k+1\right\}.
$$
and define the word
\begin{equation}\label{eq:word_case1}
w = a^{l_k-1}ba^{\Lambda^-}c^{l_k-1}a^{\Lambda^+}b.
\end{equation}
We will show that $Sw=S'$. Write $S=S_1\cup S_2$ with
\begin{eqnarray*}
S_1 &=& \left\{2,\ldots,n-l_k+1\right\} = \left\{2,\ldots,1+\Lambda\right\},\\
S_2 &=& \left\{n-l_k+2,n-l_k+3\right\} = \left\{2+\Lambda,3+\Lambda\right\}.
\end{eqnarray*}
Then
\begin{eqnarray}
S_1w &=& \left\{2,\ldots,1+\Lambda\right\}a^{l_k-1}ba^{\Lambda^-}c^{l_k-1}a^{\Lambda^+}b\nonumber\\
&=&
\left\{l_k+1,\ldots,n\right\}ba^{\Lambda^-}c^{l_k-1}a^{\Lambda^+}b\nonumber\\
&=&
\left\{l_k+1,\ldots,n\right\}a^{\Lambda^-}c^{l_k-1}a^{\Lambda^+}b\nonumber\\
&=& \left(\left\{1,\ldots,\Lambda^-\right\}\cup\left\{\Lambda^-+l_k+1,\ldots,n\right\}\right)c^{l_k-1}a^{\Lambda^+}b\label{eq:c_occurs}\\
&=&
\left(\left\{1,\ldots,\Lambda^-\right\}\cup\left\{\Lambda^-+l_k+1,\ldots,n\right\}\right)a^{\Lambda^+}b\nonumber\\
&=& \left\{1,\ldots,\Lambda\right\}b\nonumber\\ &=&
\left\{\begin{array}{lll}\left\{2\right\}=\left\{1+\Lambda\right\}&\textrm{if}&\Lambda=1\\\left\{2,\ldots,\Lambda\right\}
&\textrm{if}&\Lambda\geq 2,\end{array}\right.\nonumber
\end{eqnarray}
where sets of the form $\left\{x,\ldots,y\right\}$ with $x>y$
should be interpreted as being empty. This occurs if $\Lambda^-=0$
or $\Lambda^+=0$. Furthermore
\begin{eqnarray}
S_2w &=& \left\{2+\Lambda,3+\Lambda\right\}a^{l_k-1}ba^{\Lambda^-}c^{l_k-1}a^{\Lambda^+}b = \left\{1,2\right\}ba^{\Lambda^-}c^{l_k-1}a^{\Lambda^+}b\nonumber\\ &=& \left\{2\right\}a^{\Lambda^-}c^{l_k-1}a^{\Lambda^+}b = \left\{2+\Lambda^-\right\}c^{l_k-1}a^{\Lambda^+}b= \left\{1+\Lambda^-\right\}a^{\Lambda^+}b\label{eq:c_occurs2}\\
&=& \left\{1+\Lambda\right\}b= \left\{1+\Lambda\right\}.\nonumber
\end{eqnarray}
It follows that the word $w$ has the property
$$
Sw = (S_1\cup S_2)w = S_1w\cup S_2w = \left\{2,\ldots,\Lambda+1\right\} = S'.
$$
and its length is
$|w| = l_k-1+1+\Lambda^-+l_k-1+\Lambda^++1 = 2l_k+\Lambda = l_k+n <2n$.
In the automaton $C_n$ the sets $S$ and $S'$ are both on
the shortest path from $Q$ to a singleton and the shortest path is
defined by $S(a^{n-1}b)^2 = S'$. Since $|(a^{n-1}b)^2| = 2n >
|w|$, the statement of the lemma follows.

The above proof fails in case $l_k \leq 2$, since then $n-l_k+3
>n$. However, the proofs for the other cases use pretty much the same ideas.

\paragraph{Case 2: $L\geq 3$ and $l_k= 2$ for some $k\leq L-1$.}
Take $k$ such that $l_k= 2$ and write
$$
\Lambda^- = \sum_{i=1}^{k-1}l_i,\quad \Lambda^+
=\sum_{i=k+2}^Ll_i,
$$
for the sum of the loop lengths before the $k$th loop and after the
$(k+1)$th loop respectively. These sums can be zero if $k=1$ or
$k=L-1$. Define $\Lambda = \Lambda^-+\Lambda^+ = n-l_k-l_{k+1}\leq
n-3$. From the assumption $L\geq 3$ it follows that $\Lambda\geq
1$. Take
$$
S = \left\{2,3,\ldots,\Lambda+3\right\},\quad S' =
\left\{2,3,\ldots,\Lambda+1\right\}.
$$
and define the word
$$
w = a^{l_k+l_{k+1}-1}ba^{\Lambda^-}ca^{\Lambda^+}b.
$$
By a similar argument as in Case 1 it follows that $Sw=S'$: Let
$S_1 = \left\{2,\ldots,\Lambda+1\right\}$, then
\begin{eqnarray}
S_1w &=&
\left\{2,\ldots,\Lambda+1\right\}a^{l_k+l_{k+1}-1}ba^{\Lambda^-}ca^{\Lambda^+}b\nonumber\\
&=& \left\{l_k+l_{k+1}+1,\ldots,n\right\}ba^{\Lambda^-}ca^{\Lambda^+}b\nonumber\\
&=& \left\{l_k+l_{k+1}+1,\ldots,n\right\}a^{\Lambda^-}ca^{\Lambda^+}b\nonumber\\
&=& \left(\left\{1,\ldots,\Lambda^-\right\}\cup\left\{\Lambda^-+l_k+l_{k+1}+1,\ldots,n\right\}\right)ca^{\Lambda^+}b\label{eq:c_occurs3}\\
&=& \left(\left\{1,\ldots,\Lambda^-\right\}\cup\left\{\Lambda^-+l_k+l_{k+1}+1,\ldots,n\right\}\right)a^{\Lambda^+}b\nonumber\\
&=& \left\{1,\ldots,\Lambda\right\}b =
\left\{\begin{array}{lll}\left\{2\right\}=\left\{1+\Lambda\right\}&\textrm{if}&\Lambda=1\\\left\{2,\ldots,\Lambda\right\}
&\textrm{if}&\Lambda\geq 2,\end{array}\right.\nonumber
\end{eqnarray}
Completely analogous to Case 1, we have
$$
\left\{\Lambda+2,\Lambda+3\right\}w = \left\{1+\Lambda\right\}.
$$
Therefore,
$$
Sw =
\left\{2,\ldots,\Lambda+1\right\}w\cup\left\{\Lambda+2,\Lambda+3\right\}w
= \left\{2,\ldots,\Lambda+1\right\} = S'
$$
Since $w$ has length $n+2<2n$, the statement of the lemma follows.

\paragraph{Case 3: $L\geq 3$ and $l_L = 2$.}
Define
\begin{equation}\label{eq:word_case3}
S = \left\{2,\ldots,n\right\},\qquad w = a^2ba^{n-3}cab.
\end{equation}
Then
\begin{eqnarray}
Sw &=&
(\left\{1,2\right\}\cup\left\{4,\ldots,n\right\})ba^{n-3}cab= (\left\{2\right\}\cup\left\{4,\ldots,n\right\})a^{n-3}cab\nonumber\\
&=& (\left\{n-1\right\}\cup\left\{1,\ldots,n-3\right\})cab= (\left\{n\right\}\cup\left\{1,\ldots,n-3\right\})ab\label{eq:c_occurs4}\\
&=& \left\{1,\ldots,n-2\right\}b =
\left\{2,\ldots,n-2\right\}.\nonumber
\end{eqnarray}
Since $|w|=n+3<2n$, the result follows.\qed
\end{proof}

\subsection{The Additional Symbol has Deficiency 1}

In this section we assume that the additional symbol $c$ satisfies
$|Qc|=n-1$. We will prove that the extended automaton
$C_n^c$ admits a synchronizing word of length strictly
less than $(n-1)^2$ for every non-trivial choice of $c$. The first
step (Lemma's \ref{lemma:case2a}, \ref{lemma6}, \ref{lemma7} and
Corollary \ref{cor8}) is to show that the only candidates to
preserve the shortest synchronizing word length have a loop
structure similar to the permutations in Lemma
\ref{lemma:c_structure}. In Lemma \ref{lemma9} we couple such
candidates $c$ to a permutation $\tilde c$, which leads to the
conclusion that the automaton with $c$ synchronizes at least as
fast as the automaton with $\tilde c$.

\begin{lemma}\label{lemma:case2a}
Let $n\geq 5$ and let $C_n^c$ be an extension of the automaton
$C_n$ by a symbol $c$ for which $|Qc|=n-1$. If
the shortest synchronizing word for $C_n^c$ has length
$(n-1)^2$, then $Qc = Q\setminus\left\{1\right\}$ and $c$ defines
a permutation on $Q\setminus\left\{1\right\}$.
\end{lemma}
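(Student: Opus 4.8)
The plan is to analyze which deficiency-$1$ symbols $c$ can possibly fail to yield a shortcut, using the same ``distance in the power automaton'' strategy outlined at the start of Section~\ref{seccn}. A deficiency-$1$ map $c$ is determined by its image $Qc = Q\setminus\{j\}$ for some $j$, by the unique pair $\{p,p'\}$ of states that $c$ merges (i.e.\ $pc=p'c$), and by how the remaining states are distributed over $Qc$. First I would rule out all choices of the omitted state $j \neq 1$: if $Qc = Q\setminus\{j\}$ with $j\geq 2$, then apply $c$ to a suitable set on the path \eqref{eq:path}. Concretely, since $Q\setminus\{j-1\}$ appears on that path (it equals $Q\setminus\{j-1\} = Q b a^{j-2}$ when $j\geq 3$, and $Q\setminus\{1\}=Qb$ has an analogous treatment for the boundary case), and $(Q\setminus\{j-1\})$ has $n-1$ elements, one checks that a single application of $c$ can already collapse it further or reach a set of size $n-1$ off the optimal path, so that the distance bound from \eqref{eq:powerpath} is beaten. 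The point is that reaching an $(n-1)$-set in one extra step from $Q$ via $c$, when $C_n$ needs $n-1$ $a$-steps plus a $b$ to traverse the path \eqref{eq:path}, already saves enough. So I would establish $Qc=Q\setminus\{1\}$.

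Next I would show $c$ restricted to $Q\setminus\{1\}$ is a bijection of $Q\setminus\{1\}$ onto itself, i.e.\ the merged pair $\{p,p'\}$ must be $\{1,\,1c^{-1}\text{-preimage}\}$ — more precisely, that state $1$ is the unique state not in the image \emph{and} the fibre structure forces $c$ to act injectively on $Q\setminus\{1\}$. Suppose not: then some state $q_0\neq 1$ has two $c$-preimages, both in $Q\setminus\{1\}$ (since $1$ is also omitted from the image only if... — here one has to be careful, because $1\notin Qc$ already uses up the ``missing'' slot). If two states of $Q\setminus\{1\}$ map to the same place, then $|Qc|\le n-2$, contradicting deficiency exactly $1$ — unless $1c$ coincides with the image of some state of $Q\setminus\{1\}$, but then $|Qc| = n-1$ still forces all of $Q\setminus\{1\}$ to have distinct images, so $1$ is the only state mapping ``onto an already-hit value'', and removing $1$ leaves a bijection. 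Thus $c|_{Q\setminus\{1\}}$ is a permutation of $Q\setminus\{1\}$, which is exactly the claim.

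The main obstacle I anticipate is the first step: cleanly handling \emph{all} choices of the omitted state $j$, including the boundary values $j=2$ and $j=n$, and making sure the shortcut word one writes down genuinely has length strictly less than the relevant distance in $C_n$ (rather than merely $\le$). This is the kind of case split that the paper handles by explicitly exhibiting a word $w\in\{a,b,c\}^\star$ with $Sw=S'$ for well-chosen $S,S'$ on the path from $Q$ to $\{2\}$, and bounding $|w|$ against the distance given by \eqref{eq:powerpath} or \eqref{eq:path}; I would follow that template, picking $S=Q$ (or $S=Q\setminus\{1\}$) and $S'$ an appropriate small set, and using that a single $c$-step lands off the greedy path in a way that lets $a$'s and $b$'s finish the job faster than $w_n$ does. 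Once $j=1$ is forced, the permutation-on-$Q\setminus\{1\}$ conclusion is essentially a counting argument about fibres of a deficiency-$1$ map, which is routine. The deeper payoff — coupling such a $c$ to a genuine permutation $\tilde c$ and invoking Lemma~\ref{lemma:permutation} — is deferred to Lemma~\ref{lemma9}, so for this lemma it suffices to pin down the loop-like structure.
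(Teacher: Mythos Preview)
Your first step (forcing $Qc=Q\setminus\{1\}$) is on the right track, though the paper does it more directly than you suggest: rather than first walking to $Q\setminus\{j-1\}$ along the path~\eqref{eq:path} and then applying $c$, it simply applies $c$ to $Q$ itself to get $Q\setminus\{q\}$, then writes down the explicit word $w=ca^{n-q}b(a^{n-1}b)^{n-3}$ and computes $|w|=(n-1)^2-q+1<(n-1)^2$ for $q\ge 2$. Your route would work too, but there is no need for the intermediate set.

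The genuine gap is in your second step. You treat ``$c$ is a permutation on $Q\setminus\{1\}$'' as a pure counting fact about deficiency-$1$ maps, writing that ``$|Qc|=n-1$ still forces all of $Q\setminus\{1\}$ to have distinct images'' and calling the conclusion ``routine''. This is false. A deficiency-$1$ map with image $Q\setminus\{1\}$ has exactly one colliding pair $\{p,p'\}$, and nothing forces $1\in\{p,p'\}$. For instance with $n=5$ take $1c=2$, $2c=3c=3$, $4c=4$, $5c=5$: then $Qc=\{2,3,4,5\}=Q\setminus\{1\}$ and $|Qc|=n-1$, yet $c$ is not injective on $Q\setminus\{1\}$. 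So the fibre count alone cannot finish the argument; you must use the hypothesis that no word shorter than $(n-1)^2$ synchronizes. The paper does exactly this: it assumes $Qc=Q\setminus\{1\}$ but $|Qc^2|\le n-2$, and then exhibits a short synchronizing word beginning with $c^2$ (splitting into the subcases $Qc^2=Q\setminus\{1,2\}$ and $Qc^2\subset Q\setminus\{q\}$ for some $q\ge 3$). Only after ruling out $|Qc^2|\le n-2$ does one get $(Q\setminus\{1\})c=Q\setminus\{1\}$, which is the permutation statement. Your proposal is missing this entire case analysis.
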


\begin{proof} If $Qc = Q\setminus\left\{q\right\}$ with $q\neq
1$, then $w = ca^{n-q}b(a^{n-1}b)^{n-3}$ is synchronizing and $w$
has length
$$
|w| = 1+n-q+1+n(n-3) = (n-1)^2-q+1 <(n-1)^2.
$$
If $Qc=Q\setminus\left\{1\right\}$ and $|Qc^2|\leq n-2$, then one
of the following two is true:
\begin{itemize}
\item $Qc^2 =Q\setminus\left\{1,2\right\}$.\\
In this case $w = c^2a^{n-2}b(a^{n-1}b)^{n-4}$ is synchronizing
and has length
$$
|w| = 2+n-2+1+n(n-4) = (n-1)^2-n<(n-1)^2.
$$
\item $Qc^2\subset Q\setminus\left\{q\right\}$ for some $q\geq
3$.\\
In this case $w = c^2a^{n-q}b(a^{n-1}b)^{n-3}$ is synchronizing
and $w$ has length
$$
|w| = 2+n-q+1+n(n-3) = (n-1)^2-q+2 <(n-1)^2.
$$
\end{itemize}
Therefore, we may assume that $Qc=Q\setminus\left\{1\right\}$ and
$|Qc^2|= n-1$. This means that
$(Q\setminus\left\{1\right\})c=Q\setminus\left\{1\right\}$, so $c$
defines a permutation on $Q\setminus\left\{1\right\}$.\qed\end{proof}

The next lemma shows that $c$ can be assumed to satisfy $qc\leq
q+1$ for all $q$.

\begin{lemma}\label{lemma6}
Let $n\geq 5$ and let $C_n^c$ be an extension of the automaton
$C_n$ by a symbol $c$ for which $|Qc|=n-1$. If
$qc = q+k$ for some $q\in Q$ and $k\geq 2$, then $C_n^c$
admits a synchronizing word of length strictly less than
$(n-1)^2$.
\end{lemma}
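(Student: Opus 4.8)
The plan is to combine the constraints already imposed by Lemma~\ref{lemma:case2a} with the new hypothesis $qc=q+k$, $k\ge 2$, and then exhibit an explicit short word. By Lemma~\ref{lemma:case2a} we may assume $Qc=Q\setminus\{1\}$ and that $c$ restricts to a permutation of $Q\setminus\{1\}$; in particular $1c\in Q\setminus\{1\}$, exactly one state $p$ in $Q\setminus\{1\}$ has no $c$-preimage (namely $p$ is the unique state with $qc\neq p$ for all $q$, and since $|Qc|=n-1$ we have $Qc^2=Q\setminus\{1,?\}$ only if $1c$ itself is not hit again — but as shown there $|Qc^2|=n-1$, forcing $1c$ to lie in the image of the permutation on $Q\setminus\{1\}$). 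I would first pin down this preimage structure precisely: $c$ maps $Q$ onto $Q\setminus\{1\}$, two states ($1$ and some state, call it $r$, with $rc=1c$) collapse to the common value $1c$, and $c$ is a bijection from $Q\setminus\{r\}$ to $Q\setminus\{1\}$.

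Next I would use the hypothesis $qc=q+k$ with $k\ge 2$ to get a big jump. The key computation is what happens to the set $Q\setminus\{q\}$, or to an interval, under $c$. In the permutation case (Lemma~\ref{lemma6}'s analogue for permutations, inequality~(\ref{eq:permutationrequirement})) one uses that $(Q\setminus\{q\})c = Q\setminus\{q+k\}$ and then rides the path~(\ref{eq:path}) backwards. Here $c$ is not a permutation, but the same idea works on the part of $Q$ where $c$ acts injectively: starting from a suitable reachable set $S$ on the shortest path of $C_n$, I would apply a power of $a$ to place things correctly, then $b$ to delete a state, then $c$ once (or a few times) to exploit the jump $q\mapsto q+k$, landing on a set $S'$ that also appears on the shortest $C_n$-path but is reached from $S$ much faster in $C_n^c$ than the distance $d$ between $S$ and $S'$ in $C_n$. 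Concretely I expect a word of the shape $a^{i}bc\,a^{j}b$ or $ca^{n-q'}b(a^{n-1}b)^{\text{const}}$, parallel to the words already used in Lemma~\ref{lemma:case2a}; the jump of size $k\ge 2$ is precisely what saves the extra letters relative to $w_n=b(a^{n-1}b)^{n-2}$.

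The main obstacle I anticipate is bookkeeping: one must handle where the ``collapsing'' pair $\{1,r\}$ and the ``missing'' state $p$ sit relative to the state $q$ that jumps by $k$, since these can interact (e.g.\ $q$ could be $r$, or $q+k$ could be $p$, or wrap-around effects from $na=1$). So I would organize the argument as a short case split on the position of $q$ and on whether $q+k$ stays below $n$ or wraps; in each case I write down an explicit synchronizing word and check, by a chain of set computations exactly like~(\ref{eq:c_occurs})--(\ref{eq:c_occurs4}), that its length is strictly less than $(n-1)^2$. The cleanest route is probably: take $S=Q\setminus\{1\}=Qb$, apply $a^{q-2}$ so that the jumping state is in position, apply $b$ to merge two states, apply $c$ to jump by $k$, and apply a power of $a$ followed by $b$ to reach a set of size $n-2$ already lying on the $C_n$-shortest path at distance $\ge n$; since we spent at most $n+1$ letters we beat the bound. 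I would double-check the degenerate small cases ($q$ near $n$, $k$ near $n$) separately, but these only strengthen the saving. If the explicit word is awkward to verify in full generality, the fallback is to reduce to the permutation case already settled: since $c$ acts as a permutation on $Q\setminus\{1\}$ and $Qc=Q\setminus\{1\}$, after one application of $c$ we are inside $C_n$ restricted to $Q\setminus\{1\}$ extended by the permutation $c|_{Q\setminus\{1\}}$, which by (a rescaled version of) Lemma~\ref{lemma:permutation} synchronizes quickly; combining this with the initial letter $c$ and a cyclic shift yields the desired bound. I expect the first, direct approach to be shorter, with the case analysis on the location of $q$ being the only real work.
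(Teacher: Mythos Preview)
Your instinct to start from Lemma~\ref{lemma:case2a} is right, and the core idea---apply $c$ once to a set of the form $Q\setminus\{q'\}$ on the path~(\ref{eq:path}) and land further along that path---is exactly what the paper does. But your concrete tactics are more complicated than necessary, and two of them do not work as written.

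The paper's proof is a two-line case split, not on the position of~$q$ or on wrap-around, but on whether $1c=2$. If $1c\neq 2$, then (using the reduction $Qc=Q\setminus\{1\}$ and $c$ permutes $Q\setminus\{1\}$) the state $2$ has a \emph{unique} $c$-preimage $\tilde q\in Q\setminus\{1\}$, and $(Q\setminus\{\tilde q\})c=Q\setminus\{1,2\}$; since in $C_n$ the distance from $Q\setminus\{\tilde q\}$ to $Q\setminus\{1,2\}$ is $n-\tilde q+2\ge 2$, a single~$c$ already shortcuts. If $1c=2$, then the hypothesis forces some $q\ge 2$ with $qc=q+k$, $k\ge 2$, and $(Q\setminus\{q\})c=Q\setminus\{1,q+k\}\subseteq Q\setminus\{q+k\}$, a one-letter shortcut of $k-1\ge 1$ steps on the path~(\ref{eq:path}). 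No longer words, no positional bookkeeping.

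Against this, your ``cleanest route'' $b\,a^{q-2}\,b\,c\,a^{j}\,b$ moves the hole to position $q-1$, not $q$, so the subsequent~$c$ never touches the jumping state; you would need $a^{q-1}$, and then the intermediate~$b$ is superfluous---at which point you are exactly at the paper's second case. Your fallback of restricting to $Q\setminus\{1\}$ and invoking Lemma~\ref{lemma:permutation} does not work, because $C_n$ does not restrict to an automaton on $Q\setminus\{1\}$: the symbol~$a$ sends $n$ to $1$. Finally, a small slip in your first paragraph: no state of $Q\setminus\{1\}$ lacks a $c$-preimage (that is precisely the content of $Qc=Q\setminus\{1\}$); the only state without a preimage is~$1$ itself, and the ``collapsing pair'' is $\{1,r\}$ with $rc=1c$.
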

\begin{proof} If $qc = q+k$ for some $q\in Q$ and $k\geq 2$,
then either $1c \neq 2$ or $qc = q+k$ for some $q\geq 2$ and
$k\geq 2$. We distinguish these two cases:
\begin{itemize}
\item $1c \neq 2$. In this case there exists a singleton $\tilde q
:= 2c^{-1}$, so
$$
(Q\setminus\left\{\tilde q\right\})c =
Q\setminus\left\{1,2\right\}.
$$
The sets $Q\setminus\left\{\tilde q\right\}$ and
$Q\setminus\left\{1,2\right\}$ are both on the shortest path in
$C_n$, where
$$
(Q\setminus\left\{\tilde q\right\})a^{n-\tilde q}ba =
Q\setminus\left\{1,2\right\}.
$$
Since $a^{n-\tilde q}ba\geq 2$, the shortest synchronizing word in
$C_n^c$ has length at most $(n-1)^2-1$.

\item $1c = 2$ and there exist $q\geq 2$ and $k\geq 2$ such that
$qc=q+k$. In this case
$$
(Q\setminus\left\{q\right\})c = Q\setminus\left\{1,q+k\right\}
\subseteq Q\setminus\left\{q+k\right\},
$$
which means that there is synchronizing word of length
$(n-1)^2-k+1$ in $C_n^c$, see (\ref{eq:path}).\qed
\end{itemize}\end{proof}

\begin{lemma}\label{lemma7}
Suppose $|Q|=n\geq 2$ and $c$ is such that
\begin{equation}
Qc = Q\setminus\left\{1\right\},\quad
\left(Q\setminus\left\{1\right\}\right)c =
Q\setminus\left\{1\right\}\quad\textrm{and}\quad qc \leq q+1
\textrm{\ for\ all\ } q.
\end{equation}
Then there exist numbers $L$ (number of $c$-loops) and $1\leq
l_1,\ldots,l_L \leq n-1$ (lengths of $c$-loops) with
$\sum_{i=1}^{L}l_i=n-1$ such that
\begin{eqnarray}
qc &=& \left\{\begin{array}{ll}q-l_i+1&\textrm{if}\
q=l_1+\ldots + l_i+1 \ \textrm{for\ some\ }1\leq i\leq L \\
q+1&\textrm{otherwise}\end{array}\right.
\end{eqnarray}
\end{lemma}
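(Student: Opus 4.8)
The plan is to reduce the statement to Lemma \ref{lemma:c_structure} by a relabelling. First I would pin down the action of $c$ on the state $1$: since $Qc = Q\setminus\left\{1\right\}$, no state is mapped to $1$, so $1c\neq 1$; combined with the hypothesis $1c\leq 1+1 = 2$ this forces $1c = 2$. Hence $c$ maps all of $Q$ into $Q\setminus\left\{1\right\} = \left\{2,\ldots,n\right\}$, and by the second hypothesis $c$ restricted to $\left\{2,\ldots,n\right\}$ is a permutation of that set which still satisfies $qc\leq q+1$ there.

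Next I would transport this restricted permutation to the ``standard'' state set $\left\{1,\ldots,n-1\right\}$ via the shift $q\mapsto q-1$. Concretely, define $\hat c$ on $\left\{1,\ldots,n-1\right\}$ by $p\hat c := (p+1)c - 1$. Since $p+1\in\left\{2,\ldots,n\right\}$ we get $(p+1)c\in\left\{2,\ldots,n\right\}$, so $\hat c$ is a well-defined permutation of $\left\{1,\ldots,n-1\right\}$, and $p\hat c = (p+1)c - 1 \leq (p+1)+1-1 = p+1$. Thus $\hat c$ meets the hypotheses of Lemma \ref{lemma:c_structure} with $n$ replaced by $n-1$. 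Applying that lemma yields a number $L$ of loops and lengths $1\leq l_1,\ldots,l_L\leq n-1$ with $\sum_{i=1}^{L} l_i = n-1$ such that $p\hat c = p - l_i + 1$ when $p = l_1+\ldots+l_i$ for some $i$, and $p\hat c = p+1$ otherwise.

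Finally I would unwind the relabelling. For $q\in\left\{2,\ldots,n\right\}$ we have $qc = (q-1)\hat c + 1$. If $q-1 = l_1+\ldots+l_i$, i.e.\ $q = l_1+\ldots+l_i+1$, this gives $qc = (q-1) - l_i + 1 + 1 = q - l_i + 1$; otherwise $qc = (q-1)+1+1 = q+1$. For $q=1$ we already know $1c = 2 = 1+1$, and $1$ is never of the form $l_1+\ldots+l_i+1$ with $i\geq 1$, since that sum is at least $l_1+1\geq 2$; so $q=1$ correctly falls into the ``otherwise'' branch. This is exactly the claimed formula, and since $\sum l_i = n-1$ and each $l_i\leq n-1$, the lemma follows.

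There is essentially no deep obstacle here: the content is the short induction already carried out in Lemma \ref{lemma:c_structure}, and the only things one must get right are the boundary state $q=1$ (handled separately above, with the check that it lands in the non-loop case) and the index bookkeeping, so that the ``last element of loop $i$'', located at $l_1+\ldots+l_i$ in the standard picture, is shifted to $l_1+\ldots+l_i+1$ here. As an alternative, one could avoid invoking Lemma \ref{lemma:c_structure} altogether and simply repeat its induction on $n$ directly for $c$ acting on $\left\{2,\ldots,n\right\}$ (peeling off the $c$-loop through the state $2$), which would make the proof self-contained; I would choose whichever fits the surrounding exposition better.
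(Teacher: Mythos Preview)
Your proof is correct, and it is essentially the same approach as the paper's: the paper's proof consists of the single line ``Similar to the proof of Lemma~\ref{lemma:c_structure}'', i.e.\ it simply points to that induction, and you carry this out explicitly by conjugating $c|_{\{2,\ldots,n\}}$ with the shift $q\mapsto q-1$ and invoking Lemma~\ref{lemma:c_structure} on $\{1,\ldots,n-1\}$. The only additional content you supply is the clean handling of the state $q=1$ (forcing $1c=2$ and checking it falls in the ``otherwise'' branch), which the paper leaves implicit.
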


\begin{proof} Similar to the proof of Lemma
\ref{lemma:c_structure}.\qed\end{proof}

\begin{corollary}\label{cor8}
Let $n\geq 5$ and let $C_n^c$ be an extension of the automaton
$C_n$ by a symbol $c$ for which $|Qc|=n-1$. If
the shortest synchronizing word for $C_n^c$ has length
$(n-1)^2$, then $c$ has the structure described in Lemma
\ref{lemma7}.
\end{corollary}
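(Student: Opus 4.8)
The plan is to obtain Corollary~\ref{cor8} by chaining together Lemmas~\ref{lemma:case2a}, \ref{lemma6} and \ref{lemma7}, which between them have already done all the substantive work; what remains is only to verify that their hypotheses fit together. Throughout, the standing assumptions are $n\geq 5$, $|Qc|=n-1$, and that $C_n^c$ has no synchronizing word of length strictly less than $(n-1)^2$ (equivalently, its shortest synchronizing word has length exactly $(n-1)^2$).

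First I would invoke Lemma~\ref{lemma:case2a}. Under these assumptions it yields $Qc=Q\setminus\{1\}$ and that $c$ acts as a permutation on $Q\setminus\{1\}$; in particular $(Q\setminus\{1\})c=Q\setminus\{1\}$. This supplies two of the three hypotheses of Lemma~\ref{lemma7}. Next I would apply the contrapositive of Lemma~\ref{lemma6}: since $C_n^c$ admits no synchronizing word shorter than $(n-1)^2$, there is no state $q$ with $qc=q+k$ for any $k\geq 2$, that is, $qc\leq q+1$ for all $q\in Q$. This is exactly the remaining hypothesis of Lemma~\ref{lemma7}. (As a small aside, together with $Qc=Q\setminus\{1\}$ this also pins down $1c=2$, though that is not needed explicitly.)

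Finally, with all three hypotheses of Lemma~\ref{lemma7} in place --- $Qc=Q\setminus\{1\}$, $(Q\setminus\{1\})c=Q\setminus\{1\}$, and $qc\leq q+1$ for every $q$ --- that lemma directly gives the claimed loop decomposition: there exist $L$ and $1\leq l_1,\dots,l_L\leq n-1$ with $\sum_{i=1}^L l_i=n-1$ describing $c$ as in its statement. That is precisely the assertion of the corollary.

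I do not anticipate a genuine obstacle here: the delicate part --- exhibiting explicit short synchronizing words whenever $c$ deletes a state other than $1$, or whenever $c^2$ drops the image size, or whenever $qc\geq q+2$ --- was already carried out in Lemmas~\ref{lemma:case2a} and \ref{lemma6}, and the structural analysis was isolated in Lemma~\ref{lemma7}. The only point requiring a moment of care is to confirm that "$qc\leq q+1$ for all $q$" is exactly the negation of the trigger hypothesis of Lemma~\ref{lemma6}, so that no further reduction (such as replacing $c$ by a conjugate, as was permissible in the permutation case via~(\ref{eq:permutationrequirement})) is needed to reach the normal form of Lemma~\ref{lemma7}.
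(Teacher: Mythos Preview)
Your proposal is correct and matches the paper's intent exactly: the paper states this result as a corollary with no explicit proof, relying on precisely the chain Lemma~\ref{lemma:case2a} $\Rightarrow$ Lemma~\ref{lemma6} (contrapositive) $\Rightarrow$ Lemma~\ref{lemma7} that you spell out. There is nothing to add.
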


An illustration of the statement is given below.
The structure of $c$ if $|Qc|=n-1$.
Dotted arrows represent chains of transitions of the form $qc = q+1$.

\begin{center}

\includegraphics[scale=0.25]{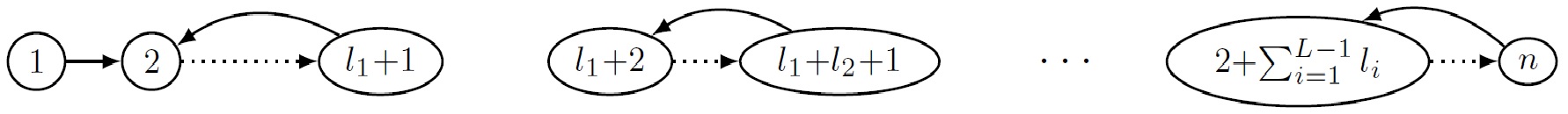}

\end{center}

Finally, in the next lemma, we handle symbols $c$ having the
structure described in Lemma \ref{lemma7}. If all loops of $c$
have length 1, then $qc = qb$ for all $q\in Q$. Therefore the case
$L=n-1$ is excluded.

\begin{lemma}\label{lemma9}
Let $n\geq 5$ and let $C_{n}^c$ be an extension of the automaton
$C_n$ by a symbol $c$ as given in Lemma
\ref{lemma7}. If $1\leq L \leq n-2$, then $C_{n}^c$
admits a synchronizing word of length strictly less than
$(n-1)^2$.
\end{lemma}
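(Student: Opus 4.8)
The plan is to reduce everything to the permutation case of Lemma~\ref{lemma:permutation}. By Corollary~\ref{cor8} together with Lemmas~\ref{lemma6} and~\ref{lemma7} we may assume that $c$ has exactly the stated structure: $1c=2$, and $c$ restricts to a permutation of $Q\setminus\{1\}=\{2,\dots,n\}$ built from $L$ loops of lengths $l_1,\dots,l_L$ with $\sum_{i=1}^{L}l_i=n-1$ and $1\le L\le n-2$. I would then introduce a permutation $\tilde c$ of the whole state set $Q$ by declaring state $1$ a fixed point of $\tilde c$ and letting $\tilde c$ coincide with $c$ on $\{2,\dots,n\}$. This $\tilde c$ satisfies $q\tilde c\le q+1$ for all $q$, so by Lemma~\ref{lemma:c_structure} it is a permutation whose loop lengths are $1,l_1,\dots,l_L$, i.e.\@ it has $\tilde L=L+1$ loops. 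Since $1\le L\le n-2$ we get $2\le\tilde L\le n-1$, so $\tilde c$ is non-trivial and Lemma~\ref{lemma:permutation} applies verbatim to $C_n^{\tilde c}$, producing a synchronizing word $\tilde w$ for $C_n^{\tilde c}$ with $|\tilde w|<(n-1)^2$.

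It then remains to transfer this word to $C_n^c$, i.e.\@ to show that $C_n^c$ has a synchronizing word of length at most $|\tilde w|$. The symbols $c$ and $\tilde c$ agree everywhere except at state $1$, where $1c=2$ while $1\tilde c=1$; moreover $1c=1b=2$. On the level of the power automaton this gives, for every $S\subseteq Q$, the identity $Sc=\psi(S\tilde c)$, where $\psi\colon 2^Q\to 2^Q$ collapses state $1$ onto state $2$ (equivalently $\psi(S)=Sb$, and always $|\psi(S)|\le|S|$). I would obtain the transferred word by replacing every occurrence of $\tilde c$ in $\tilde w$ by $c$ and then following the two runs in parallel: writing $Q=T_0\to T_1\to\cdots$ for the run of $\tilde w$ in $C_n^{\tilde c}$ and $T_0'\to T_1'\to\cdots$ for the run of the transferred word in $C_n^c$, one controls the discrepancy between $T_i$ and $T_i'$, which after each step is either empty (with $T_i'\subseteq T_i$) or a single misplaced state. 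The point is that $\psi$ is idempotent and commutes with $b$, and that $\tilde c$ preserves the shape of the discrepancy, so that only $a$-steps move the misplaced state forward; since $\tilde w$ ends in a singleton the discrepancy is eventually absorbed, and the transferred word is synchronizing for $C_n^c$ of the same length as $\tilde w$. Equivalently, one may simply re-run the explicit words constructed in the three cases of Lemma~\ref{lemma:permutation} with $c$ in place of $\tilde c$ and check that each intermediate set is \emph{contained in}, rather than equal to, the corresponding checkpoint set $\{2,\dots,k\}$ on the shortest path of $C_n$ --- which by monotonicity of the power automaton is all that is needed --- paying special attention to the configuration where the $\{1\}$-loop of $\tilde c$ is immediately followed by its unique loop of length $\ge 3$, for which a word must be chosen by hand.

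The main obstacle is precisely this transfer step: there is no exact semiconjugacy between the power automata of $C_n^{\tilde c}$ and $C_n^c$, because the extra transition $1\mapsto 2$ of $c$ does not commute with $a$ (applying $a$ turns the collapsed pair $\{1,2\}$ into $\{2,3\}$), so a naive symbol-by-symbol replacement cannot be argued to preserve the run. The bookkeeping therefore has to exploit the facts above together with the observation that for synchronization one only needs to reach \emph{some} singleton, so it suffices for the run of $C_n^c$ to stay below, in the inclusion order, a synchronizing run of $C_n^{\tilde c}$. Once this comparison is established, $C_n^c$ inherits a synchronizing word of length $|\tilde w|<(n-1)^2$, which proves the lemma.
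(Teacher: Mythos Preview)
Your plan coincides with the paper's Case~2 (the reduction to Lemma~\ref{lemma:permutation} via the permutation $\tilde c$), but the paper first disposes of the case $l_1\ge 2$ by a direct two-step argument and only afterwards performs the transfer. This case split is not cosmetic: your transfer step breaks precisely when $l_1\ge 2$.

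The bookkeeping argument you sketch is incorrect. Take $n=5$, $L=1$, $l_1=4$, so $c$ cycles $2\to3\to4\to5\to2$ and $1c=2$; then $\tilde c$ has loops $\{1\}$ and $\{2,3,4,5\}$, and Lemma~\ref{lemma:permutation} (necessarily Case~1 with $k=2$, $\Lambda^-=1$) produces the synchronizing word $ba^4b\cdot a^3ba\tilde c^3b$ of length $14<16$. Replacing $\tilde c$ by $c$ and running both words from $Q$, the two runs agree up to $\{1,3\}$, but then $\{1,3\}\tilde c^3=\{1,2\}$ while $\{1,3\}c^3=\{2,4\}$; after the final $b$ one lands in $\{2\}$ versus $\{2,4\}$. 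The ``misplaced state'' is not absorbed, the $c$-run does not stay below the $\tilde c$-run, and the transferred word fails to synchronize. So neither your inclusion claim nor the cardinality claim $|T_i'|\le|T_i|$ survives the last $b$-step.

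What makes the paper's transfer go through is the inclusion $Sc^k\subseteq S\tilde c^k$ whenever $1\notin S$ or $2\in S$, and this relies on $2c=2$, i.e.\ $l_1=1$. With $l_1=1$ one has $\tilde l_1=\tilde l_2=1$, forcing $k\ge 3$ in every case of Lemma~\ref{lemma:permutation}, hence $\Lambda^-\ge 2$ and $2$ lies in the relevant sets $T_1$. Your parenthetical about ``the $\{1\}$-loop immediately followed by its unique loop of length $\ge 3$'' does identify where containment fails, but you neither supply the hand-chosen word nor cover the companion failure in Case~2 when $l_1=2$ forces $k=2$. The paper's remedy for all of $l_1\ge 2$ is the one-line shortcut
\[
Q\setminus\{l_1\}\xrightarrow{\,c\,}Q\setminus\{1,l_1+1\}\xrightarrow{\,c\,}Q\setminus\{1,2\},
\]
replacing a segment of length $n-l_1+2\ge 3$ on the $C_n$-path by $c^2$; only after this is the permutation reduction invoked, and then only for $l_1=1$.
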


\begin{proof} We distinguish two cases: $2\leq l_1\leq n-1$ and
$l_1=1$.\\ \textbf{Case 1: $2\leq l_1\leq n-1$}. In this case
$$
Q\setminus\left\{l_1\right\} \xrightarrow{c}
Q\setminus\left\{1,l_1+1\right\}\xrightarrow{c}
Q\setminus\left\{1,2\right\}.
$$
In $C_n$ the shortest path between these sets is given
by
$$
Q\setminus\left\{l_1\right\}\xrightarrow{a^{n-l_1}}
Q\setminus\left\{n\right\} \xrightarrow{b}
Q\setminus\left\{1,n\right\} \xrightarrow{a}
Q\setminus\left\{1,2\right\},
$$
which has length $n-l_1+2\geq 3$. Therefore, $C_n^c$ has
a synchronizing word of length at most $(n-1)^2-1$.

\paragraph{Case 2: $l_1 = 1$.} This means that $2c = 2$. We define a
permutation $\tilde c$ on $Q$ by
$$
q\tilde c = \left\{\begin{array}{ll}1&\textrm{if\
}q=1,\\qc&\textrm{if\ }q\neq 1.\end{array}\right.
$$
The permutation $\tilde c$ has $\tilde L:= L+1\geq 2$ loops and
$L$ of them coincide with the loops of $c$. Since $c$ has a loop
of length at least 2, so does $\tilde c$. The loop lengths of
$\tilde c$ are given by $\tilde l_1 = 1$ and $\tilde l_{k} =
l_{k-1}$ for $2\leq k\leq \tilde L$

By Lemma \ref{lemma:permutation} we already know that there exists
a synchronizing word $\tilde w\in\left\{a,b,\tilde c\right\}^*$
with $|\tilde w|<(n-1)^2$. Define $w\in\left\{a,b,c\right\}^*$ as
the word that is obtained from $\tilde w$ by replacing all
instances of $\tilde c$ by $c$. Clearly this operation preserves
the word length. We will show that the word $w$ is a synchronizing
word for $C_n^c$.

The key observation is that the permutation $\tilde c$ has the
following property for $S\subseteq Q$:
\begin{equation}\label{eq:property}
\textrm{If $1\not\in S$ or $2\in S$, then $Sc^k\subseteq S\tilde c
^k$ for all $k\geq 1$}.
\end{equation}
We consider the same cases as in the proof of Lemma
\ref{lemma:permutation}:
\begin{itemize}
\item $\tilde L \geq 2$ and $\tilde l_k\geq 3$ for some $k$. In
this case
$$
\tilde w = a^{\tilde l_k-1}ba^{\Lambda^-}\tilde c^{\tilde
l_k-1}a^{\Lambda^+}b
$$
is synchronizing (compare to (\ref{eq:word_case1})), where
$$
\Lambda^- =\sum_{i=1}^{k-1}\tilde l_i \geq \tilde l_1+\tilde l_2 =
2,\qquad \Lambda^+ = \sum_{i=k+1}^{\tilde L}\tilde l_i.
$$
Here we used that $\tilde l_1 = \tilde l_2 = 1$ and $k\geq 3$
since $1\tilde c = 1$ and $2\tilde c = 2$. Let
$$
T_1 =
\left\{1,\ldots,\Lambda^-\right\}\cup\left\{\Lambda^-+l_k+1,\ldots,n\right\}
\quad\textrm{and}\quad T_2 = \left\{2+\Lambda^-\right\},
$$
and observe that $2\in T_1$ and $1\not\in T_2$. By property
(\ref{eq:property}), we obtain
$$
T_1 c^{\tilde l_k-1} \subseteq  T_1 \tilde c^{\tilde l_k-1} ,\quad
T_2 c^{\tilde l_k-1} \subseteq  T_2 \tilde c^{\tilde l_k-1}.
$$
Comparing with the argument in the proof of Lemma
\ref{lemma:permutation}, in particular (\ref{eq:c_occurs}) and
(\ref{eq:c_occurs2}), we conclude that $Qw\subseteq Q\tilde w$ and
$w$ is synchronizing.

\item $\tilde L \geq 3$ and $\tilde l_k=2$ for some $k\leq \tilde
L-1$. Here an analogous argument as in the previous case gives the
result.

\item $\tilde L \geq 3$ and $\tilde l_{\tilde L} =2$. Let
\begin{equation*}
\tilde w = a^2ba^{n-3}\tilde cab,
\end{equation*}
analogous to (\ref{eq:word_case3}). Since $n\geq 5$, we have
$$
2\in \left\{n-1\right\}\cup \left\{1,\ldots,n-3\right\}.
$$
Applying property (\ref{eq:property}) again, we obtain
$$
\left(\left\{n-1\right\}\cup \left\{1,\ldots,n-3\right\}\right)c
\subseteq \left(\left\{n-1\right\}\cup
\left\{1,\ldots,n-3\right\}\right)\tilde c.
$$
By comparing with (\ref{eq:c_occurs4}), it follows that
$Qw\subseteq Q\tilde w$ and therefore $w$ synchronizes.\qed
\end{itemize}\end{proof}

\subsection{The Additional Symbol has Deficiency at least 2}

\begin{lemma}
Let $n\geq 5$ and let $C_{n}^c$ be an extension of the automaton
$C_n$ by a symbol $c$ such that $|Qc|\leq n-2$.
Then $C_{n}^c$ admits a synchronizing word of length
strictly less than $(n-1)^2$.
\end{lemma}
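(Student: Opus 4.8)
The plan is to reuse the reset word $w_n=b(a^{n-1}b)^{n-2}$ of $C_n$ almost verbatim, replacing only its prefix by the new symbol $c$. This is exactly the device already used in the first case of the proof of Lemma~\ref{lemma:case2a}, and the point is that under the hypothesis $|Qc|\le n-2$ the situation that made Lemma~\ref{lemma:case2a} delicate, namely $Qc=Q\setminus\{1\}$, simply cannot occur.

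First I would choose the state to be "omitted". Since $|Qc|\le n-2$, the set $Q\setminus Qc$ has at least two elements, so it contains some state $q$ with $q\ge 2$; fix such a $q$, so that $Qc\subseteq Q\setminus\{q\}$. Next I would recall from~(\ref{eq:path}) that in $C_n$ one has $Q\,(ba^{q-1})=Q\setminus\{q\}$ for $1\le q\le n$, so that $w_n$ factors as $w_n=ba^{q-1}\cdot s_q$ with $s_q=a^{n-q}b(a^{n-1}b)^{n-3}$ of length $|s_q|=(n-1)^2-q$, and moreover $(Q\setminus\{q\})\,s_q=Q\,w_n=\{2\}$ because $w_n$ synchronizes $C_n$ to state $2$.

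The word I would then propose for $C_n^c$ is $w=c\,s_q=c\,a^{n-q}b(a^{n-1}b)^{n-3}$. Reading it from $Q$ gives $Qw=(Qc)\,s_q\subseteq(Q\setminus\{q\})\,s_q=\{2\}$, and since $Qc\ne\emptyset$ this forces $Qw=\{2\}$; hence $w$ is synchronizing. Its length is $|w|=1+|s_q|=(n-1)^2-q+1$, which is at most $(n-1)^2-1$ because $q\ge 2$, and the lemma follows. Alternatively this fits the general "two sets on the \v{C}ern\'y path" scheme: take $S=Q$ and $S'=\{2\}$, at distance $(n-1)^2$, and $w$ realizes $Sw=S'$ with $|w|<(n-1)^2$.

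There is essentially no obstacle in this case; the only thing that needs checking is the factorization $w_n=ba^{q-1}s_q$ together with the observation that the tail $s_q$ still drives $Q\setminus\{q\}$ to a singleton, both of which are immediate from the known behaviour of $w_n$ on $C_n$. The genuine difficulty in Theorem~\ref{theorem:extension} lies entirely in the cases $|Qc|=n$ and $|Qc|=n-1$ with $Qc=Q\setminus\{1\}$, handled in Lemmas~\ref{lemma:permutation} and~\ref{lemma9}: there one cannot get away with the prefix trick and must instead exploit the loop structure of $c$. By contrast, a symbol of deficiency at least $2$ is automatically forced to miss some state other than $1$, and that single extra freedom is all that is needed here.
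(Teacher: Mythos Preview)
Your argument is correct and is precisely the approach taken in the paper: pick $q\ge 2$ with $q\notin Qc$ (possible since $|Q\setminus Qc|\ge 2$), and then use the word $c\,a^{n-q}b(a^{n-1}b)^{n-3}$ of length $(n-1)^2-q+1<(n-1)^2$, exactly as in the first case of Lemma~\ref{lemma:case2a}. The paper's own proof is the one-line ``there exists $q\ge 2$ such that $Qc\subset Q\setminus\{q\}$, which implies the result,'' and you have spelled this out in full.
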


\begin{proof} There exists $q\geq 2$ such that $Qc \subset
Q\setminus\left\{q\right\}$, which implies the
result.\qed\end{proof}

\noindent\textit{Proof of Theorem \ref{theorem:extension}.} Combining all
results of the preceding sections completes the proof.\qed

\section{Conclusions and Further Research}
\label{secconc}
We investigated slowly synchronizing DFAs in two main ways. The first one is exploiting computer support for a full investigation
of such DFAs on $n$ states for $n\leq 6$. The second way is proving properties in classical mathematical style: we developed
lower bounds on synchronization length not only depending on DFA size $n$ but also on the alphabet size, and we proved that $C_n$ does not
admit non-trivial critical extensions. As remarkable results we mention:
\begin{itemize}
\item Synchronization lengths close to $(n-1)^2$ can be obtained for large (even exponential) alphabet size.
\item In contrast to what Trahtman expected, several minimal critical DFAs on 3 and 4
states can be combined and/or extended to new critical DFAs. For all of these the minimal synchronizing
word is not unique, and sometimes the synchronizing state is not unique.
\end{itemize}

Despite of extensive effort,  \v{C}ern\'y's conjecture is still open after more than half a
century.
Being a strengthening of this long standing open problem, a full characterization of all critical
DFAs (expected to only consist of $C_n$ and the critical DFAs on $\leq 6$ states investigated in this paper) may not be
tractable. More feasible challenges may include
\begin{itemize}
\item proving or disproving that every non-minimal basic critical DFA admits multiple shortest synchronizing words,
\item giving an upper bound on the number of symbols in a minimal
critical DFA (all known examples have at most three),
\item improve bounds on $d(n,k)$ or prove they are tight,
\item proving or disproving that $d(n,k+1) \leq d(n,k)$ for all $k\geq 2$.
\end{itemize}

\section*{References}

\bibliography{ref}

\begin{thebibliography}{10}

\bibitem{almeida}
J.~Almeida, S.~Margolis, B.~Steinberg, and M.~Volkov.
\newblock Representation theory of finite semigroups, semigroup radicals and
  formal language theory.
\newblock {\em Transactions of the American Mathematical Society},
  361:1429--1461, 2009.

\bibitem{AVG12}
D.~S. Ananichev, M.~V. Volkov, and V.~V. Gusev.
\newblock Primitive digraphs with large exponents and slowly synchronizing
  automata.
\newblock {\em Zap. Nauchn. Sem. S.-Peterburg. Otdel. Mat. Inst. Steklov.
  (POMI)}, 402(Kombinatorika i Teoriya Grafov. IV):9--39, 218, 2012.

\bibitem{AVZY07}
D.~S. Ananichev, M.~V. Volkov, and Y.~I. Zaks.
\newblock Synchronizing automata with a letter of deficiency 2.
\newblock {\em Theoret. Comput. Sci.}, 376(1-2):30--41, 2007.

\bibitem{beal}
M.-P. B\'eal, M.~Berlinkov, and D.~Perrin.
\newblock A quadratic upper bound on the size of a synchronizing word in
  one-cluster automata.
\newblock {\em International Journal of Foundations of Computer Science},
  22:277--288, 2011.

\bibitem{B14}
M.~V. Berlinkov.
\newblock Approximating the minimum length of synchronizing words is hard.
\newblock {\em Theory Comput. Syst.}, 54(2):211--223, 2014.

\bibitem{C64}
J.~{\v{C}}ern\'y.
\newblock Pozn\'amka k homog\'ennym experimentom s kone\v{c}n\'ymi automatmi.
\newblock {\em Matematicko-fyzik\'alny \v{c}asopis, Slovensk. Akad. Vied},
  14(3):208--216, 1964.

\bibitem{CPR71}
J.~{\v{C}}erny, A.~Piricka, and B.~Rosenauerova.
\newblock On directable automata.
\newblock {\em Kybernetika}, 7(4):289--298, 1971.

\bibitem{BDZ17}
M.~de~Bondt, H.~Don, and H.~Zantema.
\newblock {DFAs} and {PFAs} with long shortest synchronizing word length.
\newblock In {\relax Charlier}.~\'E., {\relax Leroy}.~J., and {\relax
  Rigo}.~M., editors, {\em Developments in Language Theory}, volume 10396 of
  {\em Lecture Notes in Computer Science}. Springer, Cham, 2017.

\bibitem{BDZ17a}
M.~de~Bondt, H.~Don, and H.~Zantema.
\newblock Slowly synchronizing automata with fixed alphabet size.
\newblock available at {\tt https://arxiv.org/abs/1703.07618}, 2017.

\bibitem{don}
H.~Don.
\newblock The \v{C}ern\'y conjecture and $1$-contracting automata.
\newblock {\em Electronic Journal of Combinatorics}, 23(3):P3.12, 2016.

\bibitem{DZ17}
H.~Don and H.~Zantema.
\newblock Finding {DFA}s with maximal shortest synchronizing word length.
\newblock In {\relax Drewes}.~F., {\relax Mart\'in-Vide}.~C., and {\relax
  Truthe}.~B., editors, {\em Language and Automata Theory and Applications},
  volume 10168 of {\em Springer Lecture Notes in Computer Science}. Springer,
  Cham, 2017.

\bibitem{dubuc}
L.~Dubuc.
\newblock Sur les automates circulaires et la conjecture de \v{C}ern\'y.
\newblock {\em RAIRO Inform. Theor. Appl.}, 32:21--34, 1998.

\bibitem{E90}
D.~Eppstein.
\newblock Reset sequences for monotonic automata.
\newblock {\em SIAM Journal on Computing}, 19:500--510, 1990.

\bibitem{frankl}
P.~Frankl.
\newblock An extremal problem for two families of sets.
\newblock {\em European Journal of Combinatorics}, 3:125--127, 1982.

\bibitem{K01}
J.~Kari.
\newblock A counterexample to a conjecture concerning synchronizing words in
  finite automata.
\newblock {\em EATCS Bulletin}, 73:146--147, 2001.

\bibitem{pin}
J.-E. Pin.
\newblock On two combinatorial problems arising from automata theory.
\newblock {\em Annals of Discrete Mathematics}, 17:535--548, 1983.

\bibitem{R08}
A.~Roman.
\newblock A note on \v{C}ern{\'y} conjecture for automata with 3-letter
  alphabet.
\newblock {\em Journal of Automata, Languages and Combinatorics},
  13(2):141--143, 2008.

\bibitem{starke}
P.~Starke.
\newblock Eine bemerkung \"uber homogene experimente.
\newblock {\em Elektronische Informationverarbeitung und Kybernetik},
  2:257--259, 1966.

\bibitem{steinberg}
B.~Steinberg.
\newblock The \v{C}ern\'y conjecture for one-cluster automata with prime length
  cycle.
\newblock {\em Theoretical Computer Science}, 412(39):5487--5491, 2011.

\bibitem{szykula17}
M.~Szyku\l{}a.
\newblock Improving the upper bound on the length of the shortest reset words.
\newblock Available at {\tt http://arxiv.org/abs/1702.05455}, 2017.

\bibitem{T06}
A.~N. Trahtman.
\newblock An efficient algorithm finds noticeable trends and examples
  concerning the \v{C}ern\'y conjecture.
\newblock In R.~Kr{\'a}lovi{\v{c}} and P.~Urzyczyn, editors, {\em Mathematical
  Foundations of Computer Science 2006: 31st International Symposium, MFCS
  2006}, pages 789--800. Springer Berlin Heidelberg, 2006.

\bibitem{volkov}
M.~Volkov.
\newblock Synchronizing automata and the \v{C}ern\'y conjecture.
\newblock In {\em Proceedings of LATA}, volume 5196 of {\em Springer LNCS},
  pages 11--27, 2008.

\end{thebibliography}
\bibliographystyle{abbrv}

\end{document}